\newcommand{\vbar}{\mathrel{|}}
\newcommand{\op}{\mathrm{O}}
\newcommand{\yoneda}[1]{\mathbf{y}(#1)}
\newcommand{\bind}{\ensuremath{\mathbin{>\mkern-6.8mu>\mkern-6.9mu=}}}
\newcommand{\sdots}{\mathinner{{\ldotp}{\ldotp}{\ldotp}}} 
\newcommand{\den}[1]{\llbracket#1\rrbracket}
\newcommand{\Set}{\mathbf{Set}}
\newcommand{\catch}{\mathsf{catch}}
\newcommand{\throw}{\mathsf{throw}}
\newcommand{\close}{\mathsf{close}}
\newcommand{\orop}{\mathsf{or}}
\newcommand{\once}{\mathsf{once}}
\newcommand{\fail}{\mathsf{fail}}
\newcommand{\local}[1]{\mathsf{local}^{#1}}
\newcommand{\get}{\mathsf{get}}
\newcommand{\putop}[1]{\mathsf{put}^{#1}}
\newcommand{\openop}{\mathsf{openFile}}
\newcommand{\beginop}{\mathsf{begin}}
\newcommand{\opend}{\mathsf{end}}
\newcommand{\newop}[1]{\mathsf{new}^{#1}}
\newcommand{\cut}{\mathsf{cut}}
\newcommand{\scope}{\mathsf{scope}}
\newcommand{\concat}{\mathbin{+\!\!+}}%
\newcommand{\varcatch}{\mathsf{varcatch}}
\newcommand{\hide}[1]{}
\newcommand{\listf}{\mathsf{List}}
\newcommand{\idemf}{\mathsf{Idem}}
\newcommand{\tonce}{T_{\mathbf{o}}}
\newcommand{\tlocal}{T_{\mathbf{l}}}
\newcommand{\tcatch}{T_{\mathbf{c}}}
\newcommand{\tcut}{T_{\mathbf{cs}}}
\newcommand{\thparam}{\MCT_{\mathbf{param}}}
\newcommand{\freecut}{F_\mathbf{cs}}
\newcommand{\freeonce}{F_\mathbf{o}}
\newcommand{\freecatch}{F_\mathbf{c}}
\newcommand{\freelocal}{F_\mathbf{l}}
\newcommand{\MCC}{\ensuremath{\mathcal C}}
\newcommand{\MCR}{\ensuremath{\mathcal R}}
\newcommand{\MCT}{\ensuremath{\mathcal T}}
\newcommand{\MCX}{\ensuremath{\mathcal X}}
\newcommand{\MCY}{\ensuremath{\mathcal Y}}
\newcommand{\MCZ}{\ensuremath{\mathcal Z}}
\newcommand{\bN}{\ensuremath{\mathbb N}}
\newcommand{\ar}{\ensuremath{\mathit{ar}}}
\newcommand{\typing}[2]{#1\vdash#2}
\newcommand{\blank}{{-}}
\newcommand{\Tm}{\mathsf{Tm}}
\newcommand{\Bit}{\mathbbm{2}}
\newcommand{\defeq}{\coloneqq}
\newcommand{\later}{\mathop{\triangleright}}
\newcommand{\earlier}{\mathop{\triangleleft}}
\newcommand{\upp}[1]{{\upharpoonright\!#1}}
\newcommand{\downn}[1]{{\downharpoonright\!#1}}
\newcommand{\paren}[1]{\left(#1\right)}
\newcommand{\ver}[1]{\vert#1\vert}
\newcommand{\tuple}[1]{\langle#1\rangle}
\newcommand{\brc}[1]{\{#1\}}
\crefname{definition}{Def.\@}{Defs.\@}
\crefname{proposition}{Prop.\@}{Props.\@}
\newcommand{\eqth}[3]{\mathrel{=_{#1,(#2\vbar#3)}}}
\theoremstyle{acmdefinition}
\newtheorem{remark}[theorem]{Remark}
\theoremstyle{acmplain}  
\newtheorem{perspective}[theorem]{Perspective}
\newcommand{\saml}[1]{\todo[inline]{SL: #1}}
\begin{document}

\title{Scoped Effects as Parameterized Algebraic Theories}

\author{Cristina Matache}
\author{Sam Lindley}
\affiliation{%
  \institution{University of Edinburgh}
  \country{UK}
}

\author{Sean Moss}
\affiliation{%
  \institution{University of Birmingham}
  \country{UK}
}

\author{Sam Staton}
\affiliation{%
  \institution{University of Oxford}
  \country{UK}
}

\author{Nicolas Wu}
\author{Zhixuan Yang}
\affiliation{%
 \institution{Imperial College London}
 \country{UK}
}

\renewcommand{\shortauthors}{Matache et al.}

\begin{abstract}
  Notions of computation can be modelled by monads. \emph{Algebraic
  effects} offer a characterization of monads in terms of algebraic
  operations and equational axioms, where operations are basic
  programming features, such as reading or updating the state, and
  axioms specify observably equivalent expressions. However, many
  useful programming features depend on additional mechanisms such as
  delimited scopes or dynamically allocated resources. Such mechanisms
  can be supported via extensions to algebraic effects including
  \emph{scoped effects} and \emph{parameterized algebraic
  theories}. We present a fresh perspective on scoped effects by
  translation into a variation of parameterized algebraic
  theories. The translation enables a new approach to equational
  reasoning for scoped effects and gives rise to an alternative
  characterization of monads in terms of generators and equations
  involving both scoped and algebraic operations. We demonstrate the
  power of our approach by way of equational
  characterizations of several known models of scoped effects.
\keywords{algebraic effects \and scoped effects \and monads \and category theory \and algebraic theories.}
\end{abstract}

\begin{CCSXML}
<ccs2012>
   <concept>
       <concept_id>10003752.10010124.10010131.10010133</concept_id>
       <concept_desc>Theory of computation~Denotational semantics</concept_desc>
       <concept_significance>500</concept_significance>
       </concept>
   <concept>
       <concept_id>10003752.10010124.10010131.10010137</concept_id>
       <concept_desc>Theory of computation~Categorical semantics</concept_desc>
       <concept_significance>500</concept_significance>
       </concept>
 </ccs2012>
\end{CCSXML}

\ccsdesc[500]{Theory of computation~Denotational semantics}
\ccsdesc[500]{Theory of computation~Categorical semantics}

\keywords{algebraic effects, scoped effects, monads, category theory, algebraic theories}


\maketitle

\section{Introduction}\label{sec:introduction}

The central idea of \emph{algebraic effects}~\cite{DBLP:conf/fossacs/PlotkinP01} is that impure computation can be built and reasoned about equationally, using an algebraic theory.
\emph{Effect handlers}~\cite{PlotPret13Hand} are a way of implementing algebraic effects and provide a method for modularly programming with different effects.
More formally, an effect handler gives a model for an algebraic theory.
In this paper we develop equational reasoning for a notion arising from an extension of handlers, called \emph{scoped effects}, using the framework of \emph{parameterized algebraic theories}.

The central idea of \emph{scoped effects} (Sec.~\ref{sec:scop-effects-handl}) is that certain parts of an impure computation should be dealt with one way, and other parts another way, inspired by scopes in exception handling.
Compared to algebraic effects, the crucial difference is that the scope on which a scoped effect acts is delimited.
This difference leads to a complex relationship with monadic sequencing ($\bind$).
The theory and practice of scoped effects~\cite{WSH2014,DBLP:conf/lics/PirogSWJ18,YPWvS2021,Bosman2023calculus,Berg2023framework,YW2023} has primarily been studied by extending effect handlers to deal with not just algebraic operations, but also more complex scoped operations.
They form the basis of the \texttt{fused-effects} and \texttt{polysemy} libraries for Haskell.
Aside from exception handling, other applications include back-tracking in parsing~\cite{WSH2014} and timing analysis in telemetry~\cite{github}.

\emph{Parameterized algebraic theories} (Sec.~\ref{sec:pat}) extend plain algebraic theories with variable binding operations for an abstract type of parameters. They have been used to study various resources including logic variables in logic programming~\cite{DBLP:conf/fossacs/Staton13}, channels in the $\pi$-calculus~\cite{DBLP:conf/lics/Staton13}, code pointers~\cite{jumps}, qubits in quantum programming~\cite{DBLP:conf/popl/Staton15}, and urns in probabilistic programming~\cite{beta-bernoulli}.


\paragraph{Contributions.}
%
We propose an equational perspective for scoped effects where \emph{scopes are resources},
%
by analogy with other resources like file handles.
We develop this perspective using the framework of \emph{parameterized algebraic theories}, which provides an algebraic account of effects with resources and instances.
We realize scoped effects by encoding the scopes as resources with open/close operations, analogous to opening/closing files.
This paper provides:
\begin{itemize}
\item the first syntactic sound and complete equational reasoning system for scoped effects, based on the equational reasoning for parameterized algebraic theories (Prop.~\ref{thm:sound}, Prop.~\ref{prop:free-models});
\item a canonical notion of semantic model for scoped effects supporting three key examples from the literature: nondeterminism with semi-determinism (Thm.~\ref{thm:nondet-free-model}), catching exceptions (Thm.~\ref{thm:catch-free-model}), local state (Thm.~\ref{thm:state-free-model}), and nondeterminism with cut (Thm.~\ref{thm:cut-free-model}); and
\item a reconstruction of the previous categorical analysis of scoped effects via parameterized algebraic theories: the constructors $(\earlier,\later$) are shown to be not ad hoc, but rather the crucial mechanism for arities/coarities in parameterized algebraic theories (Thm.~\ref{thm:noeqn}).
\end{itemize}


\paragraph{Example: nondeterminism with semi-determinism.}
%
 \begin{wrapfigure}[10]{r}{0.26\textwidth}
       \begin{tikzpicture}
    \node(once){} [level distance=0.55cm, sibling distance=1.3cm]
    child{node(or1){$\orop$} 
      child{node(fail){$\fail$}}
      child{node(or2){$\orop$}
        child{node{$\orop$} [sibling distance=0.8cm]
          child{node{$1$}}
          child{node{$2$}}
        }
        child{node{$\orop$} [sibling distance=0.8cm]
          child{node{$3$}}
          child{node{$4$}}
        }
      }
    }
    ++(0,-2mm) node (ghost) {};
    \node[draw,red,fit=(ghost.mid)(fail)(or2.north)(or2.east)]{};
    \node[fill=white] at (once) {$\once$};
  \end{tikzpicture}
\caption{Illustrating~(\ref{eq:scoped-once}) }
\label{fig:scoped-once}
\hide{\begin{minipage}{0.45\textwidth}
\begin{center}
  \begin{tikzpicture}
    \node(once){$\once$} [level distance=0.6cm, sibling distance=1.5cm]
    child{node(or1){$\orop_a$}[level distance = 0.65cm]
      child{node(fail){$\fail_a$}}
      child{node(or2){$\orop_a$}
        child{node{$\close_a$}
        child{node{$\orop$} [level distance=0.5cm, sibling distance=1cm]
          child{node{$1$}}
          child{node{$2$}}
        }}
        child{node{$\close_a$}
        child{node{$\orop$} [level distance=0.5cm, sibling distance=1cm]
          child{node{$3$}}
          child{node{$4$}}
        }}
      }
    };
  \end{tikzpicture}
\caption{Term~(\ref{eq:param-once}) as a tree}
\label{fig:param-once}
\end{center}
\end{minipage}}
 \end{wrapfigure}
We now briefly illustrate the intuition underlying the connection between scoped effects and parameterized algebraic theories through an example. (See Examples~\ref{ex:sig:nondet}~and~\ref{ex:nondet-eq} for further details.)
Let us begin with two algebraic operations: $\orop(x,y)$, which nondeterministically chooses between continuing\footnote{
This continuation-passing style is natural for algebraic effects, but when programming one often uses equivalent direct-style \emph{generic effects}~\cite{pp-alg-op-gen-eff} such as $\underline{\orop}:\mathsf{unit}\rightarrow\mathsf{bool}$, where $\orop(x, y)$ can be recovered by pattern matching on the result of $\underline{\orop}$.
} as computation $x$ or as computation $y$, and $\fail$, which fails immediately.
%
%
%
We add semi-determinism in the form of a \emph{scoped} operation $\once(x)$, which chooses the first branch of the computation $x$ that does not fail.
Importantly, the scope that $\once$ acts on is delimited.
The left program below returns $1$; the right one returns $1$ or $2$, as the second $\orop$ is outside the scope of $\once$.
\[
  \once(\orop(\orop(1,2),\orop(3,4)))
  \qquad
  \once(\orop(1,3))\bind \lambda x.\,\orop(x,x+1)
\]

Now consider a slightly more involved example, which also returns $1$
or $2$:
\begin{equation}
  \label{eq:scoped-once}
  \once(\orop(\fail,\,\orop(1,3)))\bind \lambda x.\,\orop(x,x+1)
\end{equation}
depicted as a tree in Fig.~\ref{fig:scoped-once} where the red box delimits the scope of $\once$.
%
%
%
We give an encoding of term~(\ref{eq:scoped-once}) in a parameterized algebraic theory as follows:
\begin{equation}
  \label{eq:param-once}
  \once(a.\orop(\fail, \orop(\close(a,\orop(1, 2)), \close(a,\orop(3, 4)))))
\end{equation}
where $a$ is the name of the scope opened by $\once$ and closed by the special $\close$ operation.
%
By equational reasoning for scoped effects~(\S\ref{sec:param:theories:scoped}) and the equations for nondeterminism (Fig.~\ref{fig:nondet-once}), we can prove that the term~(\ref{eq:param-once}) is equivalent to~$\orop(1, 2)$.

\paragraph{Changes from the conference version.}
This paper is an extended version of a ``fresh perspective'' short paper published at ESOP 2024~\citep{LindleyMMSWY24}. The additions include the following:
\begin{itemize}
\item the scoped effect of explicit nondeterminism with cut treated as a parameterized theory~(\Cref{ex:cut-scope}) and its free model~(\Cref{sec:nond-with-cut});

\item a discussion about how monads support operations~(\Cref{def:op-on-monad,def:scoped-op-on-monad}), leading to a comparison between models of algebraic theories and models of parameterized theories in~\Cref{prop:pat-monad-on-set,prop:restricts-to-monad-for-alg-theory};

\item a new section about constructing a parameterized theory from an arbitrary scoped operation~(\Cref{sec:gener-param-theory}) and how their models are related~(\Cref{thm:scoped-theory-from-pat});
  
\item proofs for the general theorems about models from~\Cref{sec:free-models}, as well as proofs of freeness for the examples of models from~\Cref{sec:free-models-scoped}.
  The proofs of freeness involve exhibiting normal forms for each parameterized theory;


%
%
\item an expanded discussion of existing work about the higher-order syntax approach to scoped effects, in~\Cref{sec:scop-effects-handl};
  
\item the observation that one of the equations in the parameterized theory on nondeterminism with once is actually derivable from the others~(\Cref{exa:nondet-once-eq});

\item more details about the alternative definition of catching exceptions as a parameterized theory~(\Cref{rem:catch-flexi}); and

\item a discussion of future research about combinations of scoped theories in~\Cref{sec:summ-rese-direct}.  
  
\end{itemize}

\section{Background}

\subsection{Algebraic Effects}\label{sec:algebr-effects-handl}

Moggi \cite{Moggi89,Moggi91} shows that many non-pure features of programming languages, typically referred to as \emph{computational effects}, can be modelled uniformly as \emph{monads}, but the question is --- \emph{how do we construct a monad for an effect}, or putting it differently, \emph{where do the monads modelling effects come from}?
A classical result in category theory is that finitary monads over the category of sets are equivalent to \emph{algebraic theories} \cite{Linton66,Lawvere63}: an algebraic theory gives rise to a finitary monad by the free-algebra construction, and conversely every finitary monad is presented by a certain algebraic theory.
Motivated by this correspondence, Plotkin and Power~\cite{PlotkinP02} show that many monads that are used for modelling computational effects can be presented by algebraic theories of some basic effectful operations and some computationally natural equations.
This observation led them to the following influential perspective on computational effects \cite{PlotkinP02}, which is nowadays commonly referred to as \emph{algebraic effects}.

\begin{perspective}[\citet{PlotkinP02}]\label{persp:effect-as-alg-theory}
  An effect is realized by an algebraic theory of its basic operations, so it \emph{determines} a monad but is not identified with the monad.
\end{perspective}

We review the framework of algebraic effects in the simplest form here and
refer the reader to \citet{PP04Over} and \citet{bauer2019algebraic} for more
discussion.

\begin{definition}\label{def:alg:sigs}
  A (first-order finitary) \emph{algebraic signature} $\Sigma = \tuple{\ver{\Sigma}, \ar}$ consists of a set $\ver{\Sigma}$, whose elements are referred to as \emph{operations}, together with a mapping $\ar : \ver{\Sigma} \to \bN$, associating a natural number to each operation, called its \emph{arity}.
\end{definition}

Given a signature $\Sigma = \tuple{\ver{\Sigma}, \ar}$, we will write $O : n$ for an operation $O \in \ver{\Sigma}$ with $\ar(O) = n$.
The \emph{terms} $\Tm_\Sigma(\Gamma)$ in a context $\Gamma$, which is a finite list of variables, are inductively generated by
\begin{equation}\label{eq:alg:term:rules}
  \inferrule
  { }
  {\typing{\Gamma,x,\Gamma'}{x}}
  \hspace{2cm}
  \inferrule
  {O : n \\ \typing{\Gamma}{t_i} \text{ for } i = 1 \dots n}
  {\typing{\Gamma}{O(t_1, \dots, t_n)}}
\end{equation}
As usual we will consider terms up to renaming of variables.
Thus a context $\Gamma = (x_1, \dots, x_n)$ can be identified with the natural number $n$, and $\Tm_\Sigma$ can be thought of as a function $\bN \to \Set$.

\begin{example}\label{ex:sig:nondet}
  The signature of \emph{explicit nondeterminism} has two operations:
  \begin{mathpar}
     \orop : 2
     \and
     \fail : 0.
  \end{mathpar}
  Some small examples of terms of this signature are
  \begin{mathpar}
    \typing{}{\fail}
    \and
    \typing{x, y, z}{\orop(x, \orop(y, z))}
    \and
    \typing{x, y, z}{\orop(\orop(x, y), \fail)}
  \end{mathpar}
\end{example}

\begin{example}\label{ex:sig:state}
  The signature of \emph{mutable state} of a single bit has operations:
  \begin{mathpar}
    \putop{0} : 1
    \and
    \putop{1} : 1
    \and
    \get : 2.
  \end{mathpar}
  The informal intuition for a term $\typing{\Gamma}{\putop{i}(t)}$ is a program that writes the bit $i \in \brc{0, 1}$ to the mutable state and then continues as another program $t$, and a term $\typing{\Gamma}{\get(t_0, t_1)}$ is a program that reads the state, and continues as $t_i$ if the state is $i$.
  For example, the term $\typing{x, y}{\putop{0}(\get(x, y))}$ first writes $0$ to the state, then reads $0$ from the state, so always continues as $x$.
  For simplicity we consider a single bit, but multiple fixed locations and other storage are possible~\cite{PlotkinP02}.
\end{example}

\begin{definition}\label{def:alg:theories}
  A (first-order finitary) \emph{algebraic theory} $\MCT = \tuple{\Sigma, E}$ is a signature $\Sigma$ (\Cref{def:alg:sigs}) and a set $E$ of \emph{equations} of the signature $\Sigma$, where an equation is a pair of terms $\typing{\Gamma}{L}$ and $\typing{\Gamma}{R}$ under some context $\Gamma$.
  We will usually write an equation as $\typing{\Gamma}{L = R}$.
\end{definition}

\begin{example}
  The theory of \emph{exception throwing} has a signature containing a single operation $\throw : 0$ and no equations.
  The intuition for $\throw$ is that it throws an exception and the control flow never comes back, so it is a nullary operation.
\end{example}

\begin{example}\label{ex:nondet-eq}
  The theory of \emph{explicit nondeterminism} has the signature in \Cref{ex:sig:nondet} and the following equations saying that $\fail$ and $\orop$ form a monoid:
  \begin{mathpar}
    \typing{x}{\orop(\fail, x) = x}
    \quad
    \typing{x}{\orop(x, \fail) = x}
    \quad
    \typing{x, y, z}{\orop(x, \orop(y, z)) = \orop(\orop(x, y), z)}
  \end{mathpar}
    Following Plotkin and Pretnar~\cite{PlotPret13Hand} we refer to this as ``explicit'' non-determinism as it does not include full symmetry laws (that is, $\orop(x,y)=\orop(y,x)$)
    or idempotence laws ($\orop(x,x)=x$).

%
\end{example}

\begin{example}\label{sec:algebraic-effects}
  The theory of \emph{mutable state} of a single bit has the signature in \Cref{ex:sig:state} and the following equations for all $i, i' \in \brc{0, 1}$:
  \begin{gather}
    \typing{x_0, x_1}{\putop{i}(\get(x_0, x_1)) = \putop{i}(x_i)} \label{eq:putget} \\
    \typing{x}{\putop{i}(\putop{i'}(x)) = \putop{i'}(x)} \label{eq:putput}\\
    \typing{x}{\get(\putop{0}(x), \putop{1}(x)) = x} \label{eq:getput}
  \end{gather}
  The conspicuously missing equation for a $\get$ after a $\get$
  \[
    \typing{x_{00}, x_{01}, x_{10}, x_{11}}{\get(\get(x_{00}, x_{01}), \get(x_{10}, x_{11})) = \get(x_{00}, x_{11})}
  \]
  can be derived in the equational logic of algebraic theories, which will be
  introduced in a more general setting later in
  \Cref{sec:param:theories:scoped}, from the equations above: 
  \begin{align*}
       & \get(\get(x_{00}, x_{01}), \get(x_{10}, x_{11})) &&\text{via \cref{eq:getput}, calling this term by $t$}\\
    =\ & \get(\putop{0}(t), \putop{1}(t)) &&\text{via \cref{eq:putget} for $\putop{0}(t)$ and $\putop{1}(t)$} \\
    =\ & \get(\putop{0}(\get(x_{00}, x_{01}), \putop{1}(\get(x_{10}, x_{11}))) &&   \text{via \cref{eq:putget} again} \\
    =\ & \get(\putop{0}(x_{00}), \putop{1}(x_{11})) &&   \text{via \cref{eq:putget} right-to-left} \\
    =\ & \get(\putop{0}(\get(x_{00}, x_{11})), \putop{1}(\get(x_{00}, x_{11}))) &&   \text{via \cref{eq:getput}} \\
    =\ & \get(x_{00}, x_{11})
  \end{align*}
\hide{
\[
    {\putop{i}(\get(x_0, x_1)) = \putop{i}(x_i)}
    \quad
    {\putop{i}(\putop{i'}(x)) = \putop{i'}(x)}
    \quad
    {\get(\putop{0}(x), \putop{1}(x)) = x}
\]
}
\end{example}

\begin{remark}\label{rem:mon-alg-theory}
Every algebraic theory gives rise to a monad on the category $\Set$ of sets by the \emph{free-algebra construction}, which we will discuss in a more general setting in \Cref{sec:free-models}.
The three examples above respectively give rise to the monads $(1 + \blank)$, $\listf$, and $(\blank \times \Bit)^\Bit$ which are used to give semantics to the respective computational effects in programming languages \cite{Moggi89,Moggi91}.
\end{remark}

In this way, the monad for a computational effect is constructed in a remarkably intuitive manner, and this approach is highly composable: one can take the disjoint union of two algebraic theories to combine two effects, and possibly add more equations to characterise the interaction between the two theories \cite{HPP06Combine}.
By contrast, 
coproducts of monads, which correspond to taking the disjoint union of algebraic theories,
are much harder to describe explicitly even when they exist.

The kind of plain algebraic theory encapsulated by \Cref{def:alg:theories} above is not, however, sufficiently expressive enough for some programming language applications.
In this paper we focus on two problems with plain algebraic theories:

\begin{enumerate}
  \item Firstly, monadic bind for the monad generated by an algebraic theory is defined using \emph{simultaneous substitution of terms}:
    given a term $t \in \Tm_\Sigma(\Gamma)$ in a context $\Gamma$ and a mapping $\sigma : \Gamma \to \Tm_\Sigma(\Gamma')$ from variables in $\Gamma$ to terms in some context $\Gamma'$, the monadic bind $t \bind \sigma$ is defined to be
    the simultaneous substitution $t[\sigma]$ of $\sigma$ in $t$:
    \begin{mathpar}
      x[\sigma] = \sigma(x)
      \and
      O(t_1, \dots, t_n)[\sigma]  = O(t_1[\sigma], \dots, t_n [\sigma]).
    \end{mathpar}
    Monadic bind for a monad is used for interpreting \emph{sequential composition} of computations. 
    Therefore, the second clause above implies that every algebraic effect operation \emph{must} commute with sequential composition.
    However, in practice not every effectful operation enjoys this property.

  \item Secondly, it is common to have \emph{multiple instances} of a computational effect that can be dynamically created.
    For example, it is typical in practice to have an effectful operation $\openop$ that creates a ``file descriptor'' for a file at a given path, and for each file descriptor there is a pair of read and write operations that are independent of those for other files.

\end{enumerate}

These two restrictions have been studied separately, and different extensions to algebraic theories generalising \Cref{def:alg:theories} have been proposed for each:
\emph{scoped algebraic effects} for the first problem above and \emph{parameterized algebraic effects} for the second.
At first glance, the two problems seem unrelated, but the key insight of this paper is that scoped effects can be fruitfully understood as a \emph{non-commutative linear} variant of parameterized effects.

\subsection{Scoped Effects}\label{sec:scop-effects-handl}
The first problem with plain algebraic theories above is that operations must commute with sequential composition.
Therefore an operation $O(a_1, \dots, a_n)$ is ``atomic'' in the sense that it may not delimit a fresh \emph{scope}.
Alas, in practice it is not uncommon to have operations that do delimit scopes.
An example is \emph{exception catching}: $\catch(p, h)$ is a binary operation on computations that first tries the program $p$ and if $p$ throws an exception then $h$ is run.
The $\catch$ operation does not commute with sequential composition as $\catch(p, h) \bind f$ behaves differently from $\catch(p \bind f, h \bind f)$.
The former catches only the exceptions in $p$ whereas the latter catches exceptions both in $p$ and in $f$.
Further examples include operations such as opening a file in a scope, running a program concurrently in a scope, and looping a program in a scope.

Operations delimiting scopes are treated as \emph{handlers} (i.e.\ models) of algebraic operations by Plotkin and Pretnar~\cite{PlotPret13Hand}, instead of operations in their own right.
The following alternative perspective was first advocated by Wu et al.\ \cite{WSH2014}.

\begin{perspective}[\citet{WSH2014}]\label{persp:scoped-effects}
  Since sequential composition of monads generated from algebraic theories is \emph{substitution}, scoped operations are \emph{operations that do not commute with substitution}.
\end{perspective}

Operations that do not commute with substitution arise in contexts other than computational effects as well, for example, the later modality in guarded dependent type theory \cite{GDTT2016}.

We can use the results of Plotkin and Power~\cite{pp-alg-op-gen-eff} to give a precise phrasing of this on the side of semantics.
For simplicity we restrict this definition to the case of monads on $\Set$.
\begin{definition}[\citet{pp-alg-op-gen-eff}]\label{def:op-on-monad}
A monad $T$ on $\Set$ can be said to \emph{support an algebraic operation} $O:n$
if it is equipped with 
a natural transformation
\begin{displaymath}
  \widehat O_A : (TA)^n \to TA
\end{displaymath}
which moreover satisfies
\begin{equation}\mu_A \circ \widehat O_{TA} = \widehat O_A \circ (\mu_A)^n\text.\label{algebraicity}
\end{equation}
\end{definition}
  For example, any free monad $T$ determined by an algebraic theory $\MCT = \tuple{\Sigma,E}$
  according to Remark~\ref{rem:mon-alg-theory} supports all the operations in the signature $\Sigma$.
  As shown by \citet{pp-alg-op-gen-eff}, equation~\eqref{algebraicity} is another phrasing of commuting with sequential composition,
  and indeed to give such a support for a signature~$\Sigma$ is to simply give a family $(\Check O \in T(n)~|~(O:n)\in \Sigma)$ of ``generic effects'',
  for then we can write $\widehat O_{A}(\vec t)=\Check O  \bind \vec t$. 
Hence we can make a precise definition of scoped operations by dropping~\eqref{algebraicity}, following~\citet{YW2023}:

\begin{definition}[\citet{YW2023}]\label{def:scoped-op-on-monad}
  Let $T$ be a monad on $\Set$.
  A \emph{scoped operation on the monad $T$}, of arity $k\in\bN$, is a family of functions:
  \begin{displaymath}
    \sigma_A : (TA)^k \rightarrow TA
  \end{displaymath}
  subject only to naturality in $A$ (but not~\eqref{algebraicity}).
\end{definition}

Extensions of effect handlers to natively accommodate scoped operations were first studied by \citet{WSH2014} in Haskell, where the authors proposed two approaches:
\begin{enumerate}[parsep=1ex]
  \item\label{item:1} The \emph{bracketing approach} uses a pair of \emph{algebraic} operations $\beginop_s$ and $\opend_s$ to encode a scoped operation $s$.
    For example, consider the program ${s}(\putop{0}); \putop{1}; x$, expressed in direct style, 
    which writes the value $0$ in the scope $s$, then writes the value $1$ outside of that scope 
    and continues with the rest of the program $x$.  This can be encoded formally as:
    \[
      \beginop_s(\putop{0}(\opend_s(\putop{1}(x)))).
    \]


  \item\label{item:2} The \emph{higher-order syntax approach} directly constructs the syntax
    for programs with algebraic and scoped operations as an initial algebra
    of an endofunctor over the category of (finitary) endofunctors over $\Set$.
    Elementarily, this amounts to adding the following rule,
    for every scoped operation $S$ that delimits $n$ scopes,
    to the rules in \cref{eq:alg:term:rules} generating terms:
    \begin{equation*}
      \inferrule
      {S : n \\
      \typing{x_1 , \dots , x_m}{t_i} \text{ for } i = 1 \dots n \\ \typing{\Gamma}{k_j} \text{ for } j = 1 \dots m }
           {\typing{\Gamma}{S(t_1 , \dots,  t_n)\brc{k_1, \dots, k_m}}}
    \end{equation*}
    The intuition is that $S(t_1, \dots, t_n)\brc{k_1, \dots, k_m}$
    is the term applying the scoped operation $S$ to the terms $t_1, \dots, t_n$
    followed by a \emph{formal substitution} that replaces the $m$ variables 
    in $t_i$ with the terms $k_1, \dots, k_m$ respectively (c.f.\ explicit
    substitution \cite{Ghani_Uustalu_Hamana_2006} and delayed substitution
    \cite{GDTT2016}).
    Since substitutions get stuck at scoped operations,
    they need to be kept in terms, if we want to have a monad of terms with
    scoped operations.

    Moreover, to model the intuition that 
    the number of internal variables $m$ 
    after a ``substitution'' should not be visible in the result of ``substitution''
    $S(t_1 , \dots,  t_n)\brc{k_1 , \dots, k_m}$,
    the terms are further quotiented by (the congruence relation generated by)
    the following rule: for all $m, m' \in \bN$ and functions $f : \brc{1 \dots m} \to \brc{1 \dots m'}$,
    \begin{equation*}
      \inferrule{ 
        S : n \\
        \typing{x_1, \dots, x_m}{t_i} \text { for } i = 1 \dots n \\
        \typing{\Gamma}{k_{j'} \text{ for  } j' = 1  \dots m'}
      }
      { \typing{\Gamma}{S(t_1, \dots, t_n)\brc{ k_{f(1)}, \dots, k_{f(m)}}   
            =  S(t_1[f], \dots, t_n[f])\brc{ k_1, \dots, k_{m'}}} }
    \end{equation*}
    where the terms $\typing{x_1 \dots x_{m'}}{t_i[f]}$ are obtained by
    replacing each variable $x_j$ in $t_i$ with $x_{f(j)}$,  for $1 \leq j \leq
    m$.
    This rule can be motivated as follows: in the left-hand side, 
every variable $x_j$ in each $t_i$ is formally replaced by $k_{f(j)}$ according to
    the formal substitution,
    while in the right-hand side, $x_j$ is first replaced by $x_{f(j)}$
    by the true substitution $[f]$, and then formally replaced by $k_{f(j)}$ according
    to the formal substitution.
    These two results should be exactly the same since we would like our formal substitution
    to behave like true substitutions.

    Terms in a context $\Gamma$ obtained in this way also form a monad~\cite{YPWvS2021}.
  \end{enumerate}
  \todo[inline]{C: I think we agreed in the meeting to replace the code above with something else.}
The higher-order syntax approach was regarded the more principled one since in the first approach ill bracketed pairs of $\beginop_s$ and $\opend_s$ are possible, such as
\[
  \opend_s(\putop{0}(\beginop_s(\beginop_s(\putop{1}(x))))).
\]
In subsequent work, both of these two approaches received further development \cite{DBLP:conf/lics/PirogSWJ18,YW2023,Berg2023framework,YPWvS2021} and operational semantics for scoped effects has also been developed \cite{Bosman2023calculus}.
Of particular relevance to the current paper is the work of \citet{DBLP:conf/lics/PirogSWJ18}, which we briefly review in the rest of this section.

\subsubsection*{Related work on models for scoped effects}

\citet{DBLP:conf/lics/PirogSWJ18} fix the ill-bracketing problem in the bracketing approach by considering the category $\Set^\bN$ whose objects are sequences $X = (X(0),X(1),\ldots)$ of sets and morphisms are just sequences of functions.
Given $X \in \Set^\bN$, the idea is that $X(n)$ represents a set of terms at \emph{bracketing level} $n$ for every $n \in \bN$.

On this category, there are two functors $(\later), (\earlier) : \Set^\bN \to \Set^\bN$, pronounced ``later'' and ``earlier'', that shift the bracketing levels:
\begin{equation}\label{eqn:later-earlier}
  (\later X)(0) = \emptyset,               \quad\quad
  (\later X) (n + 1) = X(n),       \quad\quad
  (\earlier X)(n) = X (n + 1).
\end{equation}
These two functors are closely related to the bracketing approach~(\ref{item:1}):
a morphism $b : \earlier X \to X$ for a functor $X$ opens a scope, turning a term $t$ at level $n+1$ to the term $\beginop(t)$ at level $n$.
Conversely, a morphism $e : \later X \to X$ closes a scope, turning a term $t$ outside the scope, so at level $n-1$, to the term $\opend(t)$ at level $n$.

Given two signatures $\Sigma$ and $\Sigma'$ as in \Cref{def:alg:sigs} for algebraic and scoped operations respectively, let $\bar{\Sigma}, \bar{\Sigma'} : \Set^\bN \to \Set^\bN$ be the functors given by
\[\textstyle
  (\bar{\Sigma} X)(n) = \coprod_{o \in \ver{\Sigma}} X(n)^{\ar(o)}
  \quad\text{and}\quad
  (\bar{\Sigma'} X)(n) = \coprod_{s \in \ver{\Sigma'}} X(n)^{\ar(s)}.
\]
Moreover, for every $A \in \Set$, let $\upp{A} \in \Set^\bN$ be given by
\begin{mathpar}
  (\upp{A})(0) = A  \and
  (\upp{A})(n+1) = 0,
\end{mathpar}
and conversely for every $X \in \Set^\bN$, let $\downn X \in \Set$ be given by $\downn X = X(0)$.
These are actually adjoint functors $\upp{} \dashv \downn{}$~.
Now~\Cref{prop:pswj-monads-isomorphic} constructs the syntactic monad for programs with the given algebraic and scoped operations, without taking into account any equations.



\begin{proposition}[\citet{DBLP:conf/lics/PirogSWJ18}]\label{prop:pswj-monads-isomorphic}
  The following functor can be extended to a monad that is isomorphic to the monad obtained by the higher-order syntax approach (\ref{item:2}):
\[
  \downn{\ } \circ \paren{ \bar{\Sigma} + \paren{ \bar{\Sigma'} \circ \earlier } + \later }^* \circ \upp{}
  : \Set \to \Set
\]
where $(\blank)^*$ is the free monad over an endofunctor.
\end{proposition}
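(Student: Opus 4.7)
The plan is to prove the proposition in two phases: first establish that the composite functor carries a monad structure, then identify this monad with the syntactic monad of approach~(\ref{item:2}).

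For the monad structure, my approach is to compose two adjunctions. The adjunction $\upp{} \dashv \downn{} : \Set^\bN \to \Set$ is given. The free-monad construction $F^* := (\bar\Sigma + \bar{\Sigma'}\circ\earlier + \later)^*$ on $\Set^\bN$ arises from the free/forgetful adjunction $\Phi \dashv U$ between $\Set^\bN$ and the category of algebras for the endofunctor $F = \bar\Sigma + \bar{\Sigma'}\circ\earlier + \later$. Composing the two adjunctions yields $\Phi \circ \upp{} \dashv \downn{} \circ U : F\text{-Alg} \to \Set$, whose induced monad is exactly $\downn{} \circ F^* \circ \upp{}$, with unit and multiplication inherited from the composite adjunction.

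For the isomorphism, write $T := \downn{} \circ F^* \circ \upp{}$ and $M$ for the higher-order syntax monad. The key is an inductive description: $(F^* \upp A)(n)$ is the least set containing $(\upp A)(n)$ and closed under (i) algebraic operations with level-$n$ arguments, (ii) scoped operations with level-$(n{+}1)$ arguments, and (iii) a formal constructor $\later(t)$ for $t$ at level $n{-}1$ when $n \geq 1$. At level $0$, the $\later$ clause is vacuous, and elements of $TA = (F^*\upp A)(0)$ are thus generated from variables in $A$ by algebraic operations together with scoped operations whose level-$1$ subtrees may interleave further algebraic/scoped ops with $\later$-wrappings of level-$0$ subterms. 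This mirrors the higher-order syntax: the translation $M(A) \to T(A)$ sends a scoped term $S(t_1,\ldots,t_n)\brc{k_1,\ldots,k_m}$ to the level-$0$ element $S$ applied to the level-$1$ images of the $t_i$ in which each variable $x_j$ has been replaced by $\later(\overline{k_j})$, where $\overline{k_j} \in TA$ is the recursive translation. Conversely, one reads a level-$0$ element back by regarding each maximal $\later$-wrapped subtree inside a scoped operation as a formal-substitution position, collecting all such positions into a fresh variable context, and recording the variables used inside the $t_i$'s.

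The main obstacle will be well-definedness of the forward translation with respect to the congruence generated by the $f$-rule. The translation must send both sides of that rule to the same element of $(F^* \upp A)(0)$, which follows because in the image all substitution has already been performed by $\later$-wrapping, so the choice of $m$ disappears. The other non-trivial check is that the translations are compatible with monadic unit and multiplication: the unit is immediate from the generating-variable clauses, whereas multiplication on the $T$ side is computed using the counit of $\upp{} \dashv \downn{}$ together with the $F^*$-multiplication (which extends terms by grafting at the level-$0$ leaves), and on the $M$ side it is the congruence-respecting formal substitution; grafting on the $T$ side only touches level-$0$ leaves, which correspond precisely to the non-$\later$-wrapped variable occurrences, i.e.\ the positions where $M$-substitution is allowed to propagate past scope boundaries. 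Once both translations are shown to be inverse monad morphisms by induction on term structure paired with induction on the level $n$, the proposition follows.
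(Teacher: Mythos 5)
The paper does not actually prove this proposition: it is imported from \citet{DBLP:conf/lics/PirogSWJ18} as background, so there is no internal proof to compare against. Judged on its own terms, your proposal is sound and follows essentially the strategy of the cited source. The monad structure on $\downn{\ }\circ\paren{\bar{\Sigma}+\paren{\bar{\Sigma'}\circ\earlier}+\later}^*\circ\upp{}$ does come for free from composing $\upp{}\dashv\downn{}$ with the free/forgetful adjunction for algebras of the endofunctor, and your translation --- replacing each formal-substitution variable $x_j$ in the body of a scoped operation by a $\later$-wrapped (scope-closing) occurrence of the translated continuation $\overline{k_j}$ --- is the right bridge to the higher-order syntax of~(\ref{item:2}); your observation that invariance under the re-indexing congruence holds because the formal substitution has been fully carried out in the image is correct. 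Two execution points deserve more care than you give them: (i) the inverse map must choose an enumeration of the $\later$-wrapped subtrees as a fresh variable context, and both its well-definedness and the round-trip identity on the higher-order-syntax side rely essentially on the re-indexing rule (with $f$ collapsing occurrence positions back to the original indices), not only the forward direction; and (ii) the claim that the $\paren{\blank}^*$-multiplication grafts only at level-$0$ leaves is true only because the generators $\upp{A}$ are concentrated at level $0$, and this is precisely what makes grafting coincide with the substitution of the higher-order syntax, which propagates only into the $k_j$'s. Neither point is a gap in the idea, only in the write-up.
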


\citet{DBLP:conf/lics/PirogSWJ18} defines a \emph{model} of a scoped effect as an \emph{algebra} for the monad $\paren{ \bar{\Sigma} + \paren{ \bar{\Sigma'} \circ \earlier } + \later }^*$ on $\Set^\bN$.
This is a notion of handler accommodating both algebraic and scoped effects.
In~\Cref{thm:noeqn} we show that this monad on $\Set^\bN$ arises as a special case of the free monad for a parameterized algebraic theory.
In Thms.~\ref{thm:nondet-free-model}--\ref{thm:state-free-model}, we show that three examples of models of \citet{DBLP:conf/lics/PirogSWJ18} are actually \emph{free algebras} on $\upp A\in\Set^\bN$ for an appropriate set of equations for each example.

\subsection{Parameterized Algebraic Theories}
\label{sec:pat}

Recall that our second problem with plain algebraic theories is that they do not support the dynamic creation of multiple instances of computational effects.
This problem, sometimes known as the \emph{local computational effects} problem, was first systematically studied by \citet{Power2006} in a purely categorical setting.
A syntactic framework extending that of algebraic theories, called \emph{parameterized algebraic theories}, was introduced by Staton~\cite{DBLP:conf/fossacs/Staton13,DBLP:conf/lics/Staton13} and is used to give an axiomatic account of local computational effects such as restriction~\cite{Pitts2011}, local state~\cite{PlotkinP02}, and the $\pi$-calculus~\cite{Milner1999,DBLP:journals/tcs/Stark08}.

Operations in a parameterized theory are more general than those in an algebraic theory because they may \emph{use} and \emph{create} values in an \emph{abstract} type of parameters.
The parameter type has different intended meanings for different examples of parameterized theories, typically as some kind of resource such as memory locations or communication channels.
In this paper, we propose to interpret parameters as \emph{names of scopes}.
\begin{perspective}
  Scoped operations can be understood as operations allocating and consuming instances of a resource: the names of scopes.
\end{perspective}

In the case of local state, the operations of \Cref{ex:sig:state} become $\get(a, x_0, x_1)$ and $\putop{i}(a,x)$, now taking a parameter $a$ which is the location being read or written to.
In a sense, each memory location $a$ represents an \emph{instance} of the state effect, with its own $\get$ and $\putop{}$ operations.
We also have a term $\newop{i}(a.x(a))$ which allocates a fresh location named $a$ storing an initial value $i$, then continues as $x$; the computation $x$ might mention location $a$.
The following is a possible equation, which says that reading immediately after allocating is redundant:
\begin{displaymath}
  \newop{i}(a.\get(a,x_0(a),x_1(a))) = \newop{i}(a.x_i(a)).
\end{displaymath}
For the full axiomatization of local state see~\cite[\S V.E]{DBLP:conf/lics/Staton13}.
A closed term can only mention locations introduced by $\newop{i}$, meaning that type of locations is \emph{abstract}.

To model scoped operations, we think of them as allocating a new scope.
For example, the scoped operation $\once$, which chooses the first non-failing branch of a nondeterministic computation, is written as $\once(a.x(a))$.
It creates a new scope $a$ and proceeds as $x$.
As in~\S\ref{sec:introduction}, there is an explicit operation $\close(a,x)$ for closing the scope $a$ and continuing as $x$.

Well-formed programs close scopes precisely once and in the reverse order to their allocation.
Thus in~\S\ref{sec:param:theories:scoped} we will discuss a \emph{non-commutative linear} variation of parameterized algebraic theories needed to model scoped effects.
With our framework we then give axiomatizations for examples from the scoped effects literature (Thms.~\ref{thm:nondet-free-model}--\ref{thm:state-free-model}).

Our parameters are linear in the same sense as variables in linear logic and linear lambda calculi~\cite{DBLP:journals/tcs/Girard87,benton-wadler-linear-logic-monads-and-the-lambda-calculus}, but with an additional non-commutativity restriction.
Non-commutative linear systems are also known as ordered linear systems~\cite{10.5555/935673,DBLP:conf/popl/PetersenHCP03}.
A commutative linear version of parameterized algebraic theories was considered by \citet{DBLP:conf/popl/Staton15} to give an algebraic theory of quantum computation; in this case, parameters stand for qubits.

\begin{remark}
  Parameterized algebraic theories characterize a certain class of enriched monads~\cite{DBLP:conf/fossacs/Staton13}, extending the correspondence between algebraic theories and monads on the category of sets, and the idea of Plotkin and Power~\cite{PlotkinP02} that computational effects give rise to monads (see~\S\ref{sec:algebr-effects-handl}).
Thus, the syntactic framework of parameterized theories has a canonical semantic status.
We can use the monad arising from a parameterized theory to give semantics to a programming language containing the effects in question.
\end{remark}

The framework of parameterized algebraic theories is related to graded theories~\cite{10.1145/3547654}, which also use presheaf-enrichment; second-order algebra~\cite{DBLP:journals/pacmpl/FioreS22,DBLP:conf/csl/FioreH10,DBLP:conf/mfcs/FioreM10}, which also use variable binding; and graphical methods~\cite{10.1007/978-3-319-08918-8_23}, which also connect to presheaf categories.


\section{Parameterized theories of scoped effects}\label{sec:param:theories:scoped}

In order to describe scoped effects we use a substructural version of parameterized algebraic theories~\cite{DBLP:conf/fossacs/Staton13}.
A theory consists of a signature (Def.~\ref{def:arity-signature}) and equations (Def.~\ref{def:presentation}) between terms formed from the signature.
Terms contain two kinds of variables: computation variables ($x$, $y$, \dots), which each expect a certain number of parameters, and parameter variables ($a$, $b$, \dots).
In the case of scoped effects, a parameter represents the name of a scope.

Any scoped signature gives rise to a parameterized algebraic signature (Section~\ref{sec:pat:def}). But a parameterized algebraic \emph{theory} includes equations as well as operations. In Section~\ref{sec:pat:ex} we propose equational theories for various scoped effects: exceptions, state, and non-determinism. In Section~\ref{sec:free-models-scoped} we show that the free models of these theories completely capture 
these notions from the literature. 

\subsection{Parameterized Algebraic Theories with Non-Commutative Linear Parameters}
\label{sec:pat:def}

\begin{definition}\label{def:arity-signature}

  A \emph{(parameterized) signature $\Sigma = \tuple{\ver{\Sigma}, \ar}$} consists of a set of operations $\ver{\Sigma}$ and for each operation $\op \in \ver{\Sigma}$ a \emph{parameterized arity} $\ar(\op) = {(p\vbar m_1\sdots m_k)}$ consisting of a natural number $p$ and a list of natural numbers $m_1,\sdots,m_k$.
  This means that the operation $\op$ takes in $p$ parameters and $k$ continuations, and it binds $m_i$ parameters in the $i$-th continuation.
\end{definition}

\begin{remark}\label{rem:scoped-signature-to-parameterized-signature}
Given signatures for algebraic and scoped operations, as in Def.~\ref{def:alg:sigs} and~\S\ref{sec:scop-effects-handl}, we can translate them to a parameterized signature as follows:
\begin{itemize}
\item for each algebraic operation $(\mathsf{op}:k)$ of arity $k \in \bN$, there is a parameterized operation with arity $(0 \vbar 0\sdots 0)$, where the list $0\sdots0$ has length $k$;
\item for each scoped operation $(\mathsf{sc}:k)$ of arity $k \in \bN$, there is a parameterized operation $\mathsf{sc} : (0 \vbar 1\sdots 1)$, where the list $1\sdots1$ has length $k$;
\item there is an operation $\close : (1 \vbar 0)$, which closes the most recent scope, and which all the different scoped operations share.
\end{itemize}
Not every parameterized signature arises from a scoped signature, because in general we may have operations where the arguments have different valences (e.g.~$m_1\neq m_2$). So parameterized signatures give some more flexibility. We explore this point more in Remark~\ref{rem:scoped-signature-to-parameterized-signature}.
\end{remark}


\begin{example}\label{ex:nondet-once-sig}
  The algebraic theory of explicit nondeterminism in \Cref{ex:sig:nondet} can be extended with a \emph{semi-determinism} operator $\once$:
 \begin{mathpar}
\orop : (0\vbar 0,0)  \and \once:(0 \vbar 1) \and \fail:(0\vbar -) \and \close:(1 \vbar 0)
\end{mathpar}
(We write $-$ for an empty list or context.)
The continuation of $\once$ opens a new scope, by binding a parameter.
Inside this scope, only the first successful branch of $\orop$ is kept.

We consider the terms and equations in Example~\ref{exa:nondet-once-eq} and Fig.~\ref{fig:nondet-once}.

\end{example}

\begin{figure}[t]
  \begin{align}
    \label{eq:3}  x:0,y:0,z:0\vbar - &\vdash \orop(\orop(x,y),z) = \orop(x,\orop(y,z)) \\
    x:0 \vbar - &\vdash \orop(x,\fail) = x     \label{eq:3a} \\
     x:0 \vbar - &\vdash \orop(\fail,x) = x\label{eq:3b}\\
    \label{eq:4a} - \vbar - &\vdash \once(a.\fail) = \fail\\
    x:1 \vbar - &\vdash \once(a.\orop(x(a),\,x(a))) = \once(a.x(a)) \label{eq:4b}\\
    x:0 \vbar - &\vdash \once(a.\close(a,x)) = x
  \label{eq:4c}
    \\
  \label{eq:4}
  x :0,y : 1 \vbar - &\vdash \once\bigl(a.\orop(\close(a,x),\, y(a))\bigr) = x
  \end{align}
\caption{The parameterized theory of explicit nondeterminism (\ref{eq:3}--\ref{eq:3b}) and $\once$
(\ref{eq:4a}--\ref{eq:4}). \label{fig:nondet-once}}
\end{figure}

For a given signature, we define the terms-in-context of \emph{algebra with non-commutative linear parameters}.
A context $\Gamma$ of \emph{computation variables} is a finite list $x_1 : p_1, \dots, x_n : p_n$, where each variable $x_i$ is annotated with the number $p_i$ of parameters it consumes.
A context $\Delta$ of \emph{parameter variables} is a finite list $a_1, \dots, a_m$.
Terms $\typing{\Gamma \vbar \Delta}{t}$
are inductively generated as follows:
\begin{gather}
  \label{eq:var-former}
  \inferrule{ }{\Gamma, x:p, \Gamma' \vbar  a_1\sdots a_p \vdash x(a_1\sdots a_p)} \\[0.5em]
  \label{eq:op-former}
  \inferrule
  {\Gamma \vbar \Delta, b_1\sdots b_{m_1}\vdash t_1 \\
    \dots \\ \Gamma \vbar \Delta, b_1\sdots b_{m_k}\vdash t_k \\
    \op :(p\vbar m_1 \sdots m_k)}
  {\Gamma \vbar \Delta, a_1\sdots a_p \vdash \op(a_1\sdots a_p,\; b_1\sdots b_{m_1}.t_1\ \dots\ b_1\sdots b_{m_k}.t_k)} 
\end{gather}
In the conclusion of the last rule, the parameters $a_1\sdots a_p$ are consumed by the operation $\op$.
The parameters $b_1\sdots b_{m_i}$ are bound in $t_i$.
As usual, we treat all terms up to renaming of variables.
When $p=k=0$, i.e.~$\op$ is a constant symbol, we omit the parentheses.

The context $\Gamma$ of computation variables admits the usual structural rules: weakening, contraction, and exchange; the context $\Delta$ of parameters does not.
All parameters in $\Delta$ must be used exactly once, in the reverse of the order in which they appear.
Intuitively, a parameter in $\Delta$ is the name of an open scope, so the restrictions on $\Delta$ mean that scopes must be closed in the opposite order that they were opened, that is, scopes are well-bracketed.
The arguments $t_1,\sdots,t_k$ of an operation $\op$ are continuations, each corresponding to a different branch of the computation, hence they share the parameter context $\Delta$. 

Compared to the \emph{algebra with linear parameters} used for describing quantum computation~\cite{DBLP:conf/popl/Staton15}, our syntactic framework has the additional constraint that $\Delta$ cannot be reordered.
Given these constraints, the context $\Delta$ is in fact a stack, so inside a term it is unnecessary to refer to the variables in $\Delta$ by name.
We have chosen to do so anyway in order to make more clear the connection to non-linear parameterized theories~\cite{DBLP:conf/fossacs/Staton13,DBLP:conf/lics/Staton13}.

The syntax admits the following simultaneous substitution rule:
  \begin{equation}\label{eq:general-subst-rule}
    \inferrule{(x_1:m_1\sdots x_l:m_l) \vbar \Delta \vdash t \\ \Gamma'\vbar \Delta',a_1\sdots a_{m_1} \vdash t_1\and \dots\and \Gamma'\vbar \Delta',a_1\sdots a_{m_l} \vdash t_l}
    {\Gamma' \vbar \Delta',\Delta \vdash t\bigl[ (\Delta',a_1\sdots a_{m_1}\vdash t_1)/x_1 \ \ldots \ (\Delta',a_1\sdots a_{m_l}\vdash t_l)/x_l \bigr]}
  \end{equation}
  In the conclusion, the notation $(\Delta',a_1\sdots a_{m_i}\vdash t_i)/x_i$ emphasizes that the parameters $(a_1\ldots a_{m_i})$ in $t_i$ are replaced by the corresponding parameters that $x_i$ consumes in $t$, either bound parameters or free parameters from $\Delta$.
  To ensure that the term in the conclusion is well-formed, we must substitute a term that depends on $\Delta'$ for \emph{all} the computation variables in the context of $t$.
  
  An important special case of the substitution rule is where we add a number of extra parameter variables to the beginning of the parameter context, increasing the sort of each computation variable by the same number.
  The following example instance of \eqref{eq:general-subst-rule}, where $\ar(\op) = (1 \vbar 1)$, illustrates such a ``weakening'' by adding two extra parameter variables $a'_1, a'_2$ and replacing $x : 2$ by $x' : 4$.
  \begin{displaymath}
    \inferrule{
      x:2 \vbar a_1,a_2 \vdash \op(a_2,b.x(a_1,b)) \\ x':4 \vbar a'_1,a'_2,b_1,b_2 \vdash x'(a'_1,a'_2,b_1,b_2)
    }{
      x':4\vbar a'_1,a'_2,a_1,a_2 \vdash \op(a_2,b.x'(a'_1,a'_2,a_1,b))
    }
  \end{displaymath}

\begin{definition}\label{def:presentation}
  An \emph{algebraic theory $\MCT = \tuple{\Sigma,E}$ with non-commutative linear parameters} is a parameterized signature $\Sigma$ together with a set $E$ of \emph{equations}.
  An equation is a pair of terms $\Gamma\vbar \Delta\vdash L$ and $\Gamma\vbar \Delta\vdash R$ in the same context $(\Gamma\vbar\Delta)$.
  We write an equation as $\Gamma\vbar \Delta\vdash L=R$.
\end{definition}

We will omit the qualifier ``with non-commutative linear parameters'' where convenient and refer to ``parameterized theories'' or just ``theories''.

Given a theory $\MCT=\tuple{\Sigma,E}$, we form a system of equivalence relations $\eqth \MCT \Gamma \Delta$ on terms in each context $(\Gamma\vbar\Delta)$, by closing \emph{substitution instances} of the equations in $E$ under reflexivity, symmetry, transitivity, and congruence rules:
\begin{mathpar}
  \inferrule{ }{\Gamma\vbar\Delta \vdash L = L} \and
  \inferrule{\Gamma\vbar\Delta \vdash L = R}{\Gamma\vbar\Delta \vdash R = L} \and
  \inferrule
  {\Gamma\vbar\Delta \vdash L = R \and \Gamma\vbar\Delta \vdash R = R'}
  {\Gamma\vbar\Delta \vdash L = R'} \and
  \inferrule
  {\Gamma \vbar \Delta, b_1\sdots b_{m_1}\vdash L_1=R_1 \\
    \dots \\ \Gamma \vbar \Delta, b_1\sdots b_{m_k}\vdash L_k=R_k \\
    \op :(p\vbar m_1 \sdots m_k) \in \Sigma}
  {\Gamma \vbar \Delta, a_1\sdots a_p \vdash \op(a_1\sdots a_p,\; b_1\sdots b_{m_1}.L_1\ \dots\ b_1\sdots b_{m_k}.L_k) = \op(a_1\sdots a_p,\; b_1\sdots b_{m_1}.R_1\ \dots\ b_1\sdots b_{m_k}.R_k)} 
\end{mathpar}


\subsection{Examples of Equations for Scoped Theories via Parameterized Presentations}
\label{sec:pat:ex}

\begin{example}\label{exa:nondet-once-eq}
We continue Example~\ref{ex:nondet-once-sig}, semi-determinism, with $\orop$, $\once$, $\fail$ and $\close$.
  The term formation rules in~\Cref{sec:pat:def} allow the most recently opened scope to be closed using the $\close$ operation by consuming the most recently bound parameter; $\close$ has one continuation which does not depend on any parameters.
  
  Examples of equations, i.e.~pairs of terms in the same context, are given in Figure~\ref{fig:nondet-once}.
  As an illustration, we note that Equation~(\ref{eq:4c}) is derivable from the others:
  \begin{align*}
    \once(a.\close(a,x)) &=
                           \once(a.\orop(\close(a,x),\fail))&&\text{via~(\ref{eq:3a})}
    \\&=
    x&&\text{via~(\ref{eq:4}).}
  \end{align*}
\end{example}
  
\begin{example}\label{exa:catch-signature}
  As we mentioned earlier, \emph{exception catching} is not an ordinary algebraic operation.
  The signature for throwing and catching exceptions is the following:
  \begin{mathpar}
  \throw:(0 \vbar -) \and \catch:(0 \vbar 1,1) \and \close:(1 \vbar 0)
\end{mathpar}
The $\throw$ operation uses no parameters and takes no continuations.
The $\catch$ operation uses no parameters and takes two continuations which each open a new scope, by binding a fresh parameter.
Exceptions are caught in the first continuation, and are handled using the second continuation.

The $\close$ operation uses one parameter and takes one continuation binding no parameters.
The term $\close(a, x)$ closes the scope named by $a$ and continues as~$x$.
For example, in the term $\catch(a.\close(a, x), b.y(b))$, exceptions in $x$ will not be caught, because the scope of the $\catch$ operation has already been closed.
The equations are:
\begin{align}
  \label{eq:8}
  y{:}0 \vbar - &\vdash \catch(a.\throw,\,b.\close(b,y)) = y \\
  -\vbar - &\vdash \catch(a.\throw,\,b.\throw) = \throw \label{eq:10}
\\
  x{:}0,y{:}1 \vbar - &\vdash \catch\bigl(a.\close(a,x),\,b.y(b)\bigr) = x
  \label{eq:9}
\end{align}
\end{example}



\begin{remark}\label{rem:catch-flexi}
  The arity of $\catch$ from Ex.~\ref{exa:catch-signature} corresponds to the signature used in~\cite[Example 4.5]{DBLP:conf/lics/PirogSWJ18}.
  Using the extra flexibility of parameterized algebraic theories, we could instead consider the arity $\varcatch:(0\vbar 1,0)$.
  Equations~(\ref{eq:8})--(\ref{eq:9}) become
\begin{align*}
  y{:}0 \vbar - &\vdash \varcatch(a.\throw,\,y) = y
  \\
  x{:}0,y{:}1 \vbar - &\vdash \varcatch\bigl(a.\close(a,x),\,y\bigr) = x
\end{align*}
This seems more natural as there is no need to delimit a scope in the second continuation, which handles the exceptions.

We could also encode $\varcatch$ in the theory of $\catch$ by making the following definition:
\begin{displaymath}
  \varcatch(a.x(a),y)=\catch(a.x(a),b.\close(b,y))
\end{displaymath}
and deducing the two equations for $\varcatch$ from~\cref{eq:8,eq:9}.
\end{remark}

\begin{example}[Mutable state with local values]\label{ex:state-local-sig}
  The theory of (boolean) mutable state with one memory location (Ex.~\ref{ex:sig:state}) can be extended with scoped operations $\local{0}$ and $\local{1}$ that write respectively $0$ and $1$ to the state.
  Inside the scope of $\local{}$, the value of the state just before the local is not accessible anymore, but when the $\local{}$ is closed the state reverts to this previous value.
  \begin{mathpar}
  \local{i} :(0 \vbar 1) \and \putop{i}:(0\vbar 0) \and \get:(0\vbar 0,0) \and \close:(1\vbar 0)
\end{mathpar}
The equations for the parameterized theory of state with $\local{}$ comprise the usual equations for state in the literature \cite{PlotkinP02,5570883}:
\begin{align}
  \label{eq:5}
  z:0 \vbar - &\vdash \get(\putop{0}(z),\putop{1}(z)) = z \\
  \label{eq:5a}
  z:0 \vbar - &\vdash \putop{i}(\putop{j}(z)) = \putop{j}(z) \\
  x_0:0,x_1:0 \vbar - &\vdash \putop{i}(\get(x_0,x_1)) = \putop{i}(x_i) 
\end{align}
together with equations for local/close, and the interaction with state:
\begin{align}
  x:0 \vbar - &\vdash \local{i}(a.\close (a,x)) = x \\
  x_0:1,x_1:1 \vbar - &\vdash \local{i}(a.\get(x_0(a),\, x_1(a))) = \local{i}(a.x_i(a)) \\
  \label{eq:6}
  z:1 \vbar - &\vdash \local{i}(a.\putop{j}(z(a))) = \local{j}(a.z(a))\\
  \label{eq:7}
  z:0 \vbar a &\vdash \putop{i}(\close(a,z)) = \close(a,z)
\end{align}
This extension of mutable state is different from the one discussed in~\S\ref{sec:pat}, where memory locations can be dynamically created.
\end{example}


\begin{example}[Explicit nondeterminism with cut]\label{ex:cut-scope}
  The theory of explicit nondeterminism given by~\crefrange{eq:3}{eq:3b} can be extended with an operation that prunes the list of possible results, similar to cut in Prolog:
  \begin{mathpar}
    \cut : (0\vbar 0) \and \orop : (0\vbar 0,0)  \and \fail:(0\vbar -)
  \end{mathpar}
  The $\cut$ operation is algebraic and has one continuation; $\cut(x)$ intuitively discards the choices that have not been explored yet, and returns all the possible results of $x$.
  The behaviour of $\cut$ has been axiomatized in~\cite[Sec.~6]{DBLP:journals/jfp/PirogS17}:
  \begin{align}
    \label{eq:15}
    x:0,y:0 \vbar - &\vdash \orop(\cut(x),y) = \cut(x) \\
    \label{eq:14}
    x:0,y:0 \vbar - &\vdash \orop(x,\cut(y)) = \cut(\orop(x,y)) \\
    \label{eq:13}
  x:0 \vbar - &\vdash \cut(\cut(x)) = \cut(x)
  \end{align}

  To delimit the scope in which $\cut$ discards choices, we add an operation that opens a scope, and a $\close$ operation for closing scopes:
  \begin{mathpar}
    \scope : (0 \vbar 1) \and \close: (1 \vbar 0)
  \end{mathpar}
  We propose the following equations for $\scope$:
  \begin{align}
    \label{eq:16}
    -\vbar - &\vdash \scope(a.\fail) = \fail \\
    \label{eq:11}
    x:0 \vbar - &\vdash \scope(a.\cut(x(a))) = \scope(a.x(a)) \\
    \label{eq:12}
    x:0,y:1 \vbar - &\vdash \scope(a.\orop(\close(a,x),\,y(a))) = \orop(x,\,\scope(a.y(a)))                      
\end{align}
Intuitively,~\cref{eq:11} says that when a $\cut$ reaches the boundary of a scope, the $\cut$ is erased so it cannot affect choices outside of the scope.
The Haskell implementation of the $\mathtt{scope}$ function by
\citet{DBLP:journals/jfp/PirogS17} has similar behaviour.

\Cref{eq:12} axiomatizes the interaction between opening and closing a scope.
From it, we can derive the following:
\begin{align*}
  \scope(a.\close(a,x)) &= \scope(a.\orop(\close(a,x),\fail)) &&\text{via~(\ref{eq:3a})}\\
                        &= \orop(x,\scope(a.\fail)) &&\text{via~(\ref{eq:12})} \\
                        &= \orop(x,\fail) &&\text{via~(\ref{eq:16})} \\
                        &= x &&\text{via~(\ref{eq:3a})}
\end{align*}

\end{example}

\begin{remark}
  We can almost encode the theory of $\once$ from~\Cref{exa:nondet-once-eq} into the theory for $\cut$ and $\scope$~(\Cref{ex:cut-scope}), by defining $\once$ to be $\scope$, and defining the $\close(a,x)$ of the $\once$ theory to be $\cut(\close(a,x))$.
  Then we can recover~\cref{eq:4a,eq:4c,eq:4} from \Cref{fig:nondet-once}, by using~\crefrange{eq:15}{eq:12}; but we cannot recover~\cref{eq:4b}, which is idempotence of $\orop$ inside a $\once$.
\end{remark}

\section{Models of parameterized theories}










\subsection{Models in $\Set^\bN$}\label{sec:models-Set-bN}

Models of first-order algebraic theories~\cite{bauer2019algebraic} consist simply of a set together with specified interpretations of the operations of the signature, validating a (possibly empty) equational specification.
The more complex arities and judgement forms of a parameterized theory require a correspondingly more complex notion of model.
Rather than simply being a set of abstract computations, a model will now be stratified into a sequence of sets $X = (X(0),X(1),\ldots) \in \Set^\bN$ where $X(n)$ represents computations \emph{taking $n$ parameters}.
In~\S\ref{sec:scop-effects-handl} we described Pir\'{o}g et al's somewhat different use of $\Set^\bN$~\cite{DBLP:conf/lics/PirogSWJ18}.
We connect the two approaches in Thm.~\ref{thm:noeqn} below.

At first glance, a term $x_1 : m_1, \ldots, x_k : m_k \vbar a_1, \ldots, a_p \vdash t$ should denote a function $X(m_1) \times \ldots \times X(m_k) \to X(p)$, since a $k$-tuple of possible continuations that consume different numbers of parameters is mapped to a computation that consumes $p$ parameters.
However, the admissible substitution rule~\eqref{eq:general-subst-rule} shows us that actually such a term must also denote a sequence of functions
\begin{displaymath}
  \den{x_1 : m_1, \sdots, x_k : m_k \vbar a_1, \sdots, a_p \vdash t}_{\MCX,n} : X(n + m_1) \times \ldots \times X(n + m_k) \to X(n + p).
\end{displaymath}

\begin{definition}\label{def:sigma-struct}
  Let $\Sigma$ be a parameterized signature (Def.~\ref{def:arity-signature}).
  A \emph{$\Sigma$-structure} $\MCX$ is an $X \in \Set^{\bN}$ equipped with, for each $\op : (p\vbar m_1\sdots m_k)$ and $n \in \bN$, a function
  \begin{displaymath}
    \op_{\MCX,n} : X(n + m_1) \times \ldots \times X(n + m_k) \to X(n + p).
  \end{displaymath}
\end{definition}

The interpretation of terms is now defined by structural recursion in a standard way, where the interpretation of a computation variable term such as ${x_1 : m_1, \ldots, x_k : m_k \mid a_1, \ldots, a_{m_i} \vdash x_i(a_1,\ldots,a_{m_i})}$ is given by the sequence of product projections
\begin{displaymath}
  X(n + m_1) \times \ldots \times X(n + m_i) \times \ldots \times X(n + m_k)
  \to X(n + m_i).
\end{displaymath}

\begin{definition}\label{def:model}
  Let $\MCT$ be a parameterized theory over the signature $\Sigma$.
  A $\Sigma$-structure $\MCX$ is a \emph{model} of $\MCT$ if for every equation $\Gamma\vbar\Delta \vdash s = t$ in $\MCT$, and every $n \in \bN$, we have an equality of functions $\den{\Gamma\vbar\Delta\vdash s}_{\MCX,n} = \den{\Gamma\vbar\Delta\vdash t}_{\MCX,n}$.
\end{definition}

\begin{proposition}
  The derivable equality $(=_\MCT)$ in a parameterized algebraic theory~$\MCT$ is sound:
every $\MCT$-model satisfies every equation of $=_\MCT$.
\label{thm:sound}
\end{proposition}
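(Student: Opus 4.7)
The plan is to proceed by induction on the derivation of $\Gamma \vbar \Delta \vdash s =_{\MCT} t$ using the generating rules: reflexivity, symmetry, transitivity, congruence under each operation symbol, and substitution instances of the equations in~$E$. The first three are immediate because, at each fixed $n$, the interpretations $\den{s}_{\MCX, n}$ and $\den{t}_{\MCX, n}$ are ordinary functions in $\Set$, so equality is reflexive, symmetric and transitive. The congruence rule reduces to the observation that each $\op_{\MCX, n}$ from \Cref{def:sigma-struct} is a function, hence applying it componentwise to equal tuples of functions yields equal functions.

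The main obstacle, and the substantive content of the proof, is a Substitution Lemma stating that the semantic interpretation commutes with the simultaneous substitution of~\eqref{eq:general-subst-rule}. Concretely, for any term $(x_1 : m_1, \sdots, x_l : m_l) \vbar \Delta \vdash t$ with $|\Delta| = p$, any terms $\Gamma' \vbar \Delta', a_1 \sdots a_{m_i} \vdash t_i$ with $|\Delta'| = q$, and any $n \in \bN$, the Substitution Lemma asserts an equality of functions $\prod_{x : r \in \Gamma'} X(n + r) \to X(n + q + p)$ of the form
\begin{displaymath}
  \den{t[\ldots]}_{\MCX, n}(\vec y) = \den{t}_{\MCX,\, n + q}\bigl(\den{t_1}_{\MCX, n}(\vec y),\, \sdots,\, \den{t_l}_{\MCX, n}(\vec y)\bigr).
\end{displaymath}
I would prove this by induction on $t$: in the computation variable case $x_i(a_1 \sdots a_{m_i})$ it reduces to evaluating $\den{t_i}_{\MCX, n}$ on $\vec y$, and in the operation case it follows by unfolding $\op_{\MCX, n'}$ and applying the induction hypothesis under each continuation, where the bookkeeping of the $m_i$ bound parameters on each continuation is matched on the semantic side by the level shift from $n+q$ to $n+q+m_i$ built into the family $\den{-}_{\MCX, -}$. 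This level shift is exactly why Defs.~\ref{def:sigma-struct}~and~\ref{def:model} are stratified by $n$ rather than giving a single function; in particular, pure weakening by extra parameter variables corresponds simply to evaluating $\den{-}_{\MCX, -}$ at a larger index.

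Granted the Substitution Lemma, the substitution-instance case of the outer induction is immediate: given an equation $\Gamma_0 \vbar \Delta_0 \vdash L = R$ in~$E$, the assumption that $\MCX$ is a model of $\MCT$ gives $\den{L}_{\MCX, n'} = \den{R}_{\MCX, n'}$ for every $n'$; applying the Substitution Lemma to both $L$ and $R$ with the same substituting terms then yields $\den{L[\sigma]}_{\MCX, n} = \den{R[\sigma]}_{\MCX, n}$ for every $n$, completing the induction.
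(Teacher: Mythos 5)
Your proposal is correct and matches the paper's (very terse) proof, which simply says ``by induction on the structure of derivations''; the Substitution Lemma you identify is exactly the content that induction must carry through the substitution-instance case, and your index bookkeeping ($\den{t}_{\MCX,\,n+q}$ composed with the $\den{t_i}_{\MCX,n}$) type-checks against Def.~\ref{def:sigma-struct}.
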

\begin{proof}[Proof notes]
  By induction on the structure of derivations.
\end{proof}
\newcommand{\power}[2]{[#1,#2]}
\begin{remark}\label{rem:models-via-enriched-categories}
  A more abstract view on models is based on enriched categories, since parameterized algebraic theories can be understood in terms of enriched Lawvere theories~\cite{enriched-lawvere,KELLY1993163,DBLP:conf/fossacs/Staton13}. This is useful because, by interpreting algebraic theories in different categories, we can combine the algebra structure with other structure, such as topological or order structure for recursion~\cite[\S
  6]{abramsky-jung}, or make connections with syntactic categories~\cite{STATON2014197}. Recall that the category $\Set^\bN$ has a `Day convolution' monoidal structure~\cite{10.1007/BFb0060438}:
  ${(X\otimes Y)(n) = \sum_{m_1+m_2=n} X(m_1)\times Y(m_2)}$.
  With this structure, we can interpret a parameterized algebraic theory~$\MCT$ in any $\Set^\bN$-enriched category~$\MCC$ with products, powers, and copowers. A $\MCT$-model in~$\MCC$ comprises an object $X\in \MCC$ together with, for each $\op : (p\vbar m_1\sdots m_k)$, a morphism $\yoneda p\cdot \big( \power{\yoneda {m_1}} X\times \dots\times \power {\yoneda {m_k}} X\big)\to X$, making a diagram commute for each equation in $\MCT$. (Here, we write $\yoneda m\defeq \bN(m,-)$, and $(A\cdot X)$ and $[A,X]$ for the copower and power.)
  The elementary notion of model (Def.~\ref{def:model}) is recovered because, for the symmetric monoidal closed structure on $\Set^\bN$ itself,
  $(\power{\yoneda m} X)(n)=X(n+m)$. This also connects with~(\ref{eqn:later-earlier}), since $(\later X)=\yoneda 1\otimes X$ and $(\earlier X)=\power{\yoneda 1}X$.
\end{remark}
\hide{
  A more abstract view on these models allows us to relate them to the sound models of the theories with linear parameters from~\cite{DBLP:conf/popl/Staton15}, and also give sound models in other categories.
Consider the functorial shift operations $[m,-] : \Set^\bN \to \Set^\bN$ defined by $[m,X](n) \coloneqq X(n+m)$.
They satisfy $[r,[s,X]] = [r+s,X]$.
Now we can view the sequence $\den{\Gamma \vbar \Delta \vdash t}_{\MCX,n}$ as a single morphism $\den\Gamma_\MCX \to \den\Delta_\MCX$, where \[ \den{x_1:m_1,\sdots,x_k:m_k}_\MCX \coloneqq [m_1,X] \times \sdots \times [m_k,X] \text{ and } \den{a_1,\ldots,a_p}_\MCX \coloneqq [p,X].\]

Compared to~\cite{DBLP:conf/popl/Staton15}, we have swapped the symmetric monoidal category $\mathrm{Bij}$ of finite sets and bijections for the discrete category $\bN$ considered as a monoidal category (omitting the available symmetry).
The shifted objects $[m,X]$ arise as monoidal function spaces $P^{\otimes m} \multimap X$ where $P = \yoneda1 \coloneqq \bN(1,-)$ and $\otimes$ is given by the Day convolution formula.
Moreover, by analogy with~\cite{DBLP:conf/popl/Staton15}, we can consider models of parameterized theories in any category $\MCC$ equipped with a monoidal action of $\bN$, that is, with endofunctors $[n,-] : \MCC \to \MCC$ and coherent isomorphisms $[m,[n,-]] \cong [m+n,-]$, $[0,-] \cong (-)$.}



\subsection{Free Models and Monads}\label{sec:free-models}
Strong monads are of fundamental importance to computational effects~\cite{Moggi91}. Algebraic theories give rise to strong monads via free models.

In slightly more detail, there is an evident notion of homomorphism applicable to $\Sigma$-structures and $\MCT$-models, and thus we can sensibly discuss $\Sigma$-structures and $\MCT$-models that are \emph{free} over some collection $X \in \Set^\bN$ of generators.

\begin{definition}\label{def:freeness}
  Consider a $\MCT$-model $\MCY$ with carrier $Y\in\Set^\bN$ and a morphism $\eta_X:X\rightarrow Y$ in $\Set^\bN$.
  Then $\MCY$ is \emph{free on $X$}, if for any other model $\MCZ$ and any morphism $f: X\rightarrow Z$ in $\Set^\bN$, there exists a unique homomorphism of models $\hat f : \MCY \rightarrow \MCZ$, that extends $f$, meaning $\hat f \circ \eta_X = f$ in $\Set^\bN$.
\end{definition}

Informally, for a theory $\MCT$ we define $F_\MCT X \in \Set^\bN$ by taking $F_\MCT X(n)$ to be the set of $=_\MCT$-equivalence classes of terms with parameter context $a_1,\ldots,a_n$ whose $m_i$-ary computation variables come from $X(m_i)$.
More formally, we let
\begin{displaymath}
  F_\MCT X(n) = \{ \tuple{[x_1 : m_1, \ldots, x_k : m_k \vbar a_1, \ldots a_n \vdash t]_{=_\MCT},c_1,\ldots,c_k} \mid c_i \in X(m_i) \} / \sim
\end{displaymath}
where the equivalence relation $\sim$ allows us to $\alpha$-rename context variables in the term judgements and apply permutation, contraction or weakening to the computation context paired with the corresponding transformation of the tuple $c_1,\ldots,c_k$.
It is straightforward to make $F_\MCT X$ into a $\Sigma$-structure.

\begin{proposition}\label{prop:free-models}
~
\begin{enumerate}
\item
  $F_\MCT X$ is a $\MCT$-model, and moreover a free $\MCT$-model over $X$.
\item $F_\MCT$ extends to a monad on $\Set^\bN$, strong for the Day tensor.
\item The derivable equality $(=_\MCT)$ in a parameterized algebraic theory~$\MCT$ is complete:
  if an equation is valid in every $\MCT$-model, then it is derivable in $=_\MCT$.
\end{enumerate}
\end{proposition}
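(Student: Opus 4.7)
The plan is to build $F_\MCT X$ concretely as described, verify directly that it is a free $\MCT$-model, deduce the monad structure and its strength from freeness, and obtain completeness by evaluating a hypothesized universally valid equation inside a canonical term model.

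For part (1), I would first equip $F_\MCT X$ with its $\Sigma$-structure by applying an operation symbol to chosen representatives and checking this descends to $=_\MCT$-classes via the congruence rules of~\Cref{sec:param:theories:scoped}; that $F_\MCT X$ is a model is then immediate because $F_\MCT X(n)$ is by construction quotiented by all substitution instances of equations in $\MCT$. For freeness, given a model $\MCZ$ with carrier $Z \in \Set^\bN$ and $f : X \to Z$, I would define $\hat f$ by structural recursion: a variable term $[x_i : m_i \vbar a_1, \sdots, a_{m_i} \vdash x_i(a_1, \sdots, a_{m_i})]$ paired with $c \in X(m_i)$ is sent to $f_{m_i}(c)$, while $\op(a_1\sdots a_p,\; \vec b. t_1 \ \ldots\ \vec b. t_k)$ is sent to $\op_{\MCZ,n}$ applied recursively. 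Soundness of~\Cref{thm:sound} applied to $\MCZ$ guarantees this descends to $=_\MCT$; respect for the $\sim$-relation (permutation, contraction, weakening) uses naturality in the parameter-shift index $n$ of the interpretation. Uniqueness of $\hat f$ is by induction on terms.

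For part (2), I would derive the monad structure from freeness rather than constructing it by hand. The unit $\eta_X$ sends $c \in X(m)$ to the class of the term $x:m \vbar a_1,\sdots,a_m \vdash x(a_1,\sdots,a_m)$ decorated with $c$; given $g: X \to F_\MCT Y$ in $\Set^\bN$, I would apply the universal property of $F_\MCT X$ with respect to the model $F_\MCT Y$ to produce the Kleisli extension $g^\sharp : F_\MCT X \to F_\MCT Y$, which concretely performs simultaneous substitution in line with rule~\eqref{eq:general-subst-rule}. The monad laws follow from the uniqueness clause of~\Cref{def:freeness}. For strength with respect to the Day tensor, I would invoke the enriched perspective of~\Cref{rem:models-via-enriched-categories}: because the arities are already built from $\Set^\bN$-powers $[\yoneda{m_i}, -]$, the whole construction is $\Set^\bN$-enriched, and a $\Set^\bN$-enriched monad on $\Set^\bN$ with its Day monoidal closed structure is exactly a Day-strong monad.

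For part (3), completeness, I would proceed by a term-model argument. Given a context $\Gamma = (x_1:m_1,\sdots,x_k:m_k) \vbar (a_1, \sdots, a_n)$ and terms $s$, $t$ such that $\den{s}_{\MCX, j} = \den{t}_{\MCX, j}$ holds for every model $\MCX$ and every $j$, I would introduce the ``generic'' $X_\Gamma \in \Set^\bN$ with $X_\Gamma(m) = \{ i \mid m_i = m \}$ and evaluate $s$ and $t$ in the free model $F_\MCT X_\Gamma$ with $j = n$, applied to the tuple of generators naming the variables of $\Gamma$. The key calculation, which I would verify by induction on term structure, is that $\den{\Gamma \vbar \Delta \vdash s}_{F_\MCT X_\Gamma, n}$ evaluated at these canonical arguments returns precisely the $=_\MCT$-class of $s$, and similarly for $t$; the assumed validity then forces $[s] = [t]$, hence $s =_\MCT t$. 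The main obstacle is this evaluation lemma: tracking the parameter-shift index $n$ carefully so that bound parameters introduced by operations are aligned correctly with the parameter contexts of the continuations, which is a bookkeeping induction where the base case for computation-variable terms requires the substitution rule~\eqref{eq:general-subst-rule} to match the form of the shifted interpretation.
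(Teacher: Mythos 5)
Your proposal follows essentially the same route as the paper: the same quotiented term-model construction of $F_\MCT X$, the same extension of $f$ to a homomorphism by structural recursion (the paper phrases it as applying $\den{-}_{\MCZ,0}$ to the images of the generators), bind via substitution/the universal property, and the same ``terms denote themselves in the free model'' argument for completeness. The only divergence is that you obtain the Day strength abstractly from the $\Set^\bN$-enrichment of the construction, whereas the paper writes down the explicit natural transformation $X(p)\otimes F_\MCT Y(n)\to F_\MCT(X\otimes Y)(p+n)$; both are standard and equivalent.
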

A monad $T$ on $\Set^\bN$ strong for the Day tensor is a monad in the usual sense equipped with a strength $X\otimes TY \rightarrow T(X\otimes Y)$, where $\otimes$ is the Day tensor defined in Rem.~\ref{rem:models-via-enriched-categories}.
\begin{proof}
  The first part of~(1) is straightforward and standard: the interpretation of operation symbols is read off from \eqref{eq:op-former}.
  For the second part, we use the variable introduction rule \eqref{eq:var-former} to define $\eta_X : X \to F_\MCT X$ by sending $c \in X(n)$ to the class of $\tuple{[x : n \vbar a_1, \ldots, a_n \vdash x]_{=_\MCT},c}$.
  Then, for any $\MCT$-model $\MCY$ and map $\phi : X \to Y$, we extend (uniquely) to a homomorphism $F_\MCT X \to \MCY$ by sending an element of $F_\MCT X(n)$ represented by $\tuple{[x_1 : m_1, \ldots, x_k : m_k \vbar a_1, \ldots a_n \vdash t]_{=_\MCT},c_1,\ldots,c_k}$ to  $\den{\vec x \vbar\vec a \vdash t}_{\MCY,0} (\phi_{m_1}(c_1),\ldots,\phi_{m_k}(c_k)) \in Y(n)$.
  Claim (3) follows since all terms are interpreted in free models ``as themselves'', and two terms are derivably equal iff their denotations in the free model are equal.

  For~(2), the $\eta_X$ just defined becomes the monadic unit, and bind is defined in terms of substitution as usual (rule~(\ref{eq:general-subst-rule})).
  By the general properties of the Day tensor product, the strength is determined by a natural transformation
  \begin{displaymath}
    X(p) \otimes F_\MCT Y(n) \to F_\MCT(X \otimes Y)(p + n).
  \end{displaymath}
  This map sends $c \in X(p)$ paired with an element represented by $\tuple{[y_1 : m_1, \ldots, y_k : m_k \vbar b_1, \ldots b_n \vdash t]_{=_\MCT},d_1,\ldots,d_k}$ where $d_i \in Y(m_i)$ to the class of
  \begin{displaymath}
    \tuple{[y_1 : p+m_1, \ldots, y_k : p+m_k \vbar a_1, \ldots, a_p, b_1, \ldots b_n \vdash t']_{=_\MCT},\tuple{\tuple{p,m_1},\tuple{c,d_1}},\ldots,\tuple{\tuple{p,m_k},\tuple{c,d_k}}}
  \end{displaymath}
  where $t' = t[\ldots,(a_1,\ldots,a_p, c_1,\ldots,c_{m_i} \vdash y_i(a_1,\ldots,a_p,c_1,\ldots,c_{m_i}))/y_i, \ldots]$ and the $\tuple{\tuple{p,m_i},\tuple{c,d_i}}$ are elements of $(X \otimes Y)(p + m_i) = \sum_{h_1+h_2=p+m_i} X(h_1)\times Y(h_2)$ that lie within the $X(p) \times Y(m_i)$ summand.
  It is straightforward to check the required laws.
\end{proof}

\todo[inline]{SM: It would be nice to remark that these monads are not strong for the cartesian product - pick a concrete example to show it with maybe.}

In~\Cref{sec:free-models-scoped} we will consider explicit syntax-free characterizations of the free models for particular scoped theories.
For completeness, we mention that $F_\MCT$ is part of an equivalence between such sifted-colimit-preserving strong monads and parameterized theories, e.g.~\cite[\S5]{DBLP:conf/popl/Staton15}.
We omit details as this will not be used in the rest of this paper.

\begin{proposition}
  The functor $F_\MCT$ preserves sifted colimits.
  Moreover, there is an equivalence between parameterized algebraic theories
  and monads on $\Set^\bN$ strong for the Day tensor and whose functor part preserves sifted colimits. 
\end{proposition}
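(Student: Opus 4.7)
The proof splits naturally into two parts. For sifted-colimit preservation of $F_\MCT$: sifted colimits in $\Set^\bN$ are computed pointwise, so it suffices to verify the claim for each evaluation $F_\MCT(\blank)(n)$ separately. The crucial property of sifted colimits is that they commute with finite products in $\Set$. The explicit formula in \Cref{sec:free-models} exhibits $F_\MCT X(n)$ as a quotient, by an equivalence relation $\sim$ defined in a manifestly natural way in $X$, of a coproduct indexed by $=_\MCT$-equivalence classes of term judgements $x_1 : m_1, \sdots, x_k : m_k \vbar a_1, \sdots, a_n \vdash t$ of finite products $X(m_1) \times \cdots \times X(m_k)$. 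Each summand is a finite product of evaluation functors $X \mapsto X(m_i)$, which obviously preserve sifted colimits; small coproducts preserve all colimits; and because the renaming/permutation/contraction maps generating $\sim$ are natural in $X$, the quotient also descends. Hence the whole formula preserves sifted colimits.

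For the equivalence, one direction is $\MCT \mapsto F_\MCT$ of~\Cref{prop:free-models}. In the reverse direction, given a strong monad $T$ on $\Set^\bN$ whose functor preserves sifted colimits, the plan is to define a theory $\MCT_T$ whose operations of arity $(p \vbar m_1, \sdots, m_k)$ are the elements of $T(\yoneda{m_1} + \cdots + \yoneda{m_k})(p)$, with equations all pairs of $\MCT_T$-terms having equal denotation under the canonical interpretation back into $T$. Here the Day-tensor strength of $T$ precisely encodes the admissible substitution rule~\eqref{eq:general-subst-rule}, via the identification $[\yoneda{m},X](n) = X(n+m)$ recorded in~\Cref{rem:models-via-enriched-categories}; and the unit and multiplication of $T$ correspond respectively to the variable-introduction rule and syntactic substitution. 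To see the two constructions are mutually inverse up to equivalence one uses that: (i) every object of $\Set^\bN$ is a canonical sifted colimit of finite coproducts of representables $\yoneda{m}$; (ii) any sifted-colimit-preserving functor on $\Set^\bN$ is therefore determined by its values on such finite coproducts, which are exactly what $\MCT_T$ captures; and (iii) on the theory-to-monad-to-theory round trip the operations of $\MCT$ appear as the generating elements used to build $F_\MCT(\yoneda{m_1} + \cdots + \yoneda{m_k})(p)$.

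The main obstacle is formalising this equivalence rigorously, including verifying that the equations, models, and homomorphisms correspond on both sides. Rather than carrying out all coherence verifications by hand, the plan is to invoke the general framework of enriched Lawvere theories and their correspondence with strong sifted-colimit-preserving monads~\cite{enriched-lawvere,KELLY1993163}, specialised to parameterized theories as in~\cite{DBLP:conf/fossacs/Staton13}. Our signatures and their non-commutative linear parameter contexts (\Cref{def:arity-signature}) fit this picture with $\bN$ regarded as a strict (non-symmetric) monoidal category, in contrast to the symmetric monoidal $\mathrm{Bij}$ used in~\cite{DBLP:conf/popl/Staton15} for the commutative linear case; the Day tensor on $\Set^\bN$ is the resulting enrichment base, and the present theorem is then the instance of the enriched Lawvere/monad correspondence for this base.
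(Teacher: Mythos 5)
Your proposal is essentially correct, but it is doing work the paper itself does not do: the paper states this proposition without proof, explicitly remarking that details are omitted because the result is not used later, and deferring to the analogous commutative-linear development in~\cite[\S 5]{DBLP:conf/popl/Staton15} and the enriched Lawvere theory literature~\cite{enriched-lawvere,KELLY1993163,DBLP:conf/fossacs/Staton13}. Your sketch is the standard argument that those references would instantiate, and both halves are sound. For sifted-colimit preservation, the explicit description of $F_\MCT X(n)$ is indeed a coequalizer of a coproduct (over a fixed index set of $=_\MCT$-classes of term judgements) of finite products of evaluation functors $X \mapsto X(m_i)$; since sifted colimits commute with finite products in $\Set$ and with all colimits, the composite preserves them. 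One small tightening: naturality of the generating maps of $\sim$ is not quite enough on its own --- you should also note that the \emph{domain} of the generating relation is again a coproduct of finite products of evaluations, so that the coequalizer is a colimit of sifted-colimit-preserving functors. For the equivalence, your identification of operations of arity $(p \vbar m_1 \sdots m_k)$ with elements of $T(\yoneda{m_1} + \cdots + \yoneda{m_k})(p)$ is the right generic-element calculation (via $[\yoneda{m},X](n) = X(n+m)$), and the density of finite coproducts of representables under sifted colimits is what closes the monad-to-theory-to-monad round trip. The loosest point is the theory-to-monad-to-theory direction: the extracted theory $\MCT_T$ is the \emph{saturated} theory of all derived operations, so the claimed equivalence only holds once parameterized theories are identified up to having the same clone (equivalently, isomorphic categories of models); the paper is silent on this, and your item (iii) gestures at it without pinning it down. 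Your observation that the base is $\Set^\bN$ with the Day tensor induced by the discrete monoidal category $\bN$ (rather than $\mathrm{Bij}$ as in the commutative linear case) is exactly the intended specialisation.
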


\subsubsection*{Relating parameterized theories and algebraic theories}
In~\Cref{prop:pat-monad-on-set,,thm:noeqn} we relate the free model construction $F_\MCT$ to monads on $\Set$ that support algebraic and scoped operations, as discussed in~\Cref{sec:scop-effects-handl}.
We also discuss an encoding of algebraic theories as parameterized theories~(\Cref{prop:restricts-to-monad-for-alg-theory}), which will be extended to include scoped operations in~\Cref{sec:gener-param-theory}.

Recall from~\Cref{sec:scop-effects-handl} that the functors $\upp{} : \Set \to \Set^\bN$ and $\downn{} : \Set^\bN \to \Set$ form an adjunction.
The monad $F_\MCT$ induces a monad on $\Set$:
\begin{equation}
  \label{eq:18}
  F'_\MCT \coloneqq \downn{\ } \circ F_\MCT \circ \upp{}
\end{equation}

\begin{proposition}\label{prop:pat-monad-on-set}
  Let $\MCT = \tuple{\Sigma,E}$ be a parameterized algebraic theory arising from signatures for algebraic and scoped operations (\Cref{rem:scoped-signature-to-parameterized-signature}). 
  \begin{enumerate}
  \item For each algebraic operation $(o : k)$, the monad $F'_\MCT$ supports an algebraic operation of the same arity, in the sense of  \Cref{def:op-on-monad}.
  \item For every scoped operation $(s : k)$, the monad $F'_\MCT$ supports a scoped operation of the same arity, in the sense of~\Cref{def:scoped-op-on-monad}.
  \end{enumerate}
\end{proposition}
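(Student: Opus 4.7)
The plan is to read off the required operations on $F'_\MCT A = F_\MCT(\upp A)(0)$ from the $\Sigma$-structure that $F_\MCT(\upp A)$ already carries (\Cref{prop:free-models}(1)), and then verify the required naturality and, for the algebraic case, the algebraicity equation by unfolding the definitions of the adjunction $\upp{} \dashv \downn{}$ and of $\mu$ as substitution.

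For part~(1), an algebraic operation $o:k$ translates to a parameterized arity $(0\vbar 0,\sdots,0)$, so the $\Sigma$-structure of $F_\MCT(\upp A)$ provides, at level $n=0$, a map
\[
 o_{F_\MCT\upp A,0} : F_\MCT(\upp A)(0)^k \to F_\MCT(\upp A)(0),
\]
which I take as $\widehat o_A$. Naturality in $A$ follows because, for any $f:A\to B$, the map $F_\MCT(\upp f):F_\MCT \upp A \to F_\MCT \upp B$ is a $\Sigma$-structure homomorphism (this is immediate from the inductive definition of the interpretation of operations via \eqref{eq:op-former} applied to the representing terms of \Cref{prop:free-models}). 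For algebraicity~\eqref{algebraicity}, recall that the monad multiplication on $F_\MCT$ is simultaneous substitution via~\eqref{eq:general-subst-rule}. When all valences $m_i=0$, an operation $o(t_1,\sdots,t_k)$ binds no parameters, so substitution distributes componentwise: $o(t_1,\sdots,t_k)[\vec\sigma] = o(t_1[\vec\sigma],\sdots,t_k[\vec\sigma])$. This is exactly the algebraicity square at $n=0$.

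For part~(2), a scoped operation $s:k$ has parameterized arity $(0\vbar 1,\sdots,1)$, giving at level $n=0$ a map $s_{F_\MCT\upp A,0} : F_\MCT(\upp A)(1)^k \to F_\MCT(\upp A)(0)$. To produce the desired $\sigma_A : (F'_\MCT A)^k \to F'_\MCT A$, I compose with the $\close:(1\vbar 0)$ operation at level $n=0$, which supplies $\close_{F_\MCT \upp A,0} : F_\MCT(\upp A)(0) \to F_\MCT(\upp A)(1)$:
\[
  \sigma_A \;\defeq\; s_{F_\MCT\upp A,0} \circ (\close_{F_\MCT\upp A,0})^k.
\]
Naturality in $A$ again follows because $F_\MCT\upp{}$ sends morphisms to $\Sigma$-structure homomorphisms, and by~\Cref{def:scoped-op-on-monad} no further equation is required.

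The main subtlety in writing this out carefully is the bookkeeping linking $F'_\MCT A = \downn{}(F_\MCT \upp A) = F_\MCT(\upp A)(0)$ to the level-$0$ component of the parameterized structure, and checking that $F_\MCT \upp f$ really is a $\Sigma$-structure homomorphism at the level of the concrete representation in~\Cref{prop:free-models}; but this is a routine induction on term structure. I do not expect genuine obstacles, since each of the two parts amounts to unwinding the adjunction $\upp{}\dashv\downn{}$ and invoking, respectively, the substitution-compatibility of arity-$(0\vbar 0,\sdots,0)$ operations and the bare functoriality of arity-$(0\vbar 1,\sdots,1)$ operations composed with $\close$.
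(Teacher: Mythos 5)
Part~(1) of your proposal is essentially the paper's argument and is fine: the arity-$(0\vbar 0,\sdots,0)$ operation is read off from the $\Sigma$-structure on $F_\MCT(\upp A)$, and algebraicity is exactly the fact that substitution distributes over operations binding no parameters.

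Part~(2) has a genuine gap: you define $\sigma_A = s_{F_\MCT\upp A,0}\circ(\close_{F_\MCT\upp A,0})^k$, which places the $\close$ at the \emph{root} of each argument, producing the term $s(a.\close(a,t_1),\sdots,a.\close(a,t_k))$. This closes the scope immediately, so the computations $t_i$ run entirely \emph{outside} the scope delimited by $s$ --- the opposite of the intended semantics. Concretely, for the theory of $\once$ your map collapses to the identity by Equation~\eqref{eq:4c}, and for $\catch$ it collapses to the first projection by Equation~\eqref{eq:9}. While Definition~\ref{def:scoped-op-on-monad} imposes only naturality, so your composite is technically \emph{a} scoped operation, it is not the scoped operation corresponding to $s$, and the construction of this proposition is later required (Theorem~\ref{thm:scoped-theory-from-pat}) to agree with the given semantic operation $\sigma$; your version would fail that. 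The missing idea is to place $\close$ at the \emph{leaves}: observe that $\earlier$ preserves $\MCT$-models, form the map $X\xrightarrow{\eta_X}F_\MCT X\xrightarrow{\den{\close}}\earlier F_\MCT X$, and extend it by the universal property of the free model to a homomorphism $F_\MCT X\to\earlier F_\MCT X$. Being a homomorphism, this map pushes $\close$ past every operation down to the variables, i.e.\ it wraps $\close(a,-)$ around each leaf; postcomposing with $s_{F_\MCT X}:(\earlier F_\MCT X)^k\to F_\MCT X$ then yields the correct natural transformation, whose restriction along $\upp{}\dashv\downn{}$ is the desired scoped operation on $F'_\MCT$.
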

\begin{proof}
  We have a natural transformation $(F_\MCT X)^k \to F_\MCT X$ given by substituting terms into the operation $o$ and this is ``algebraic'' in the sense of commuting with postcomposition by the monadic multiplication, so (1) follows easily.
  In general, an operation $s : (0 \vbar 1,\sdots,1)$ appears as a natural transformation
  $(\earlier F_\MCT X)^k \to F_\MCT X$.
  However, it is easy to show that $\earlier$ preserves $\MCT$-models, and in the setting of (2) we have can use $\close : (1 \vbar 0)$ to get a map $X \xrightarrow{\eta_X} F_\MCT X \xrightarrow{\den\close} \earlier F_\MCT X$.
  By the universal property of free models, this induces another map $F_\MCT X \to \earlier F_\MCT X$.
  This map adds a $\close$ at the leaves of each syntax tree, as opposed to at the root.
  Postcomposing $F_\MCT X \to \earlier F_\MCT X$ with the scoped operation $(\earlier F_\MCT X)^k \to F_\MCT X$, we have a natural transformation $(F_\MCT X)^k \to F_\MCT X$ which restricts to the desired scoped operation on $F'_\MCT$.
\end{proof}

\Cref{prop:pat-monad-on-set} (1) is still true when applied to operations with arity $(0 \vbar 0, \sdots, 0)$ part of an \emph{arbitrary} parameterized signature.

Key to~\Cref{persp:effect-as-alg-theory} is that an algebraic theory determines a monad on $\Set$.
An ordinary algebraic theory $\MCT$ can be considered as a parameterized algebraic theory in at least two ways.
Let $\MCT_1$ be the ``minimal translation'': the parameterized theory whose signature consists only of operations of arity $(0\vbar 0, \sdots, 0)$ plus all equations; and let $\MCT_2$ be $\MCT_1$ together with an operation $\close : (1 \vbar 0)$.
Thus $\MCT_2$ arises when translating into a parameterized theory via a scoped signature as in~\Cref{rem:scoped-signature-to-parameterized-signature}, albeit with no scope-opening operations.

\begin{proposition}\label{prop:restricts-to-monad-for-alg-theory}
  Let $\MCT = \tuple{\Sigma,E}$ be an algebraic theory and inducing a monad $T$ on $\Set$.
  Then $T$ is isomorphic to the $\Set$ restriction of the monads induced by the parameterized theories $\MCT_1$ and $\MCT_2$ (defined in~\cref{eq:18}), respectively:
  \begin{displaymath}
    T \cong F'_{\MCT_1} \cong F'_{\MCT_2}.
  \end{displaymath}
\end{proposition}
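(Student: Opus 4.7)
My plan is to identify $F_{\MCT_i}$ level-by-level with a levelwise application of $T$, so that restricting via $\downn{}\circ(\blank)\circ\upp{}$ extracts $T$ at parameter level $0$.

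For $\MCT_1$, every operation has arity $(0\vbar 0,\sdots,0)$, so the operation rule \eqref{eq:op-former} propagates the parameter context $\Delta$ unchanged to every continuation, and no operation ever consumes or binds a parameter. Consequently, in a well-formed term $\Gamma\vbar a_1,\sdots,a_n\vdash t$, the parameter context is the same $a_1,\sdots,a_n$ in every subterm. The variable rule \eqref{eq:var-former} then forces each computation-variable leaf $x$ to have sort exactly $n$, while constant leaves of $\MCT$ appear freely. Since the equations of $\MCT_1$ are exactly those of $\MCT$, this gives a natural isomorphism $F_{\MCT_1}(X)(n)\cong T(X(n))$. Moreover, both the unit of $F_{\MCT_1}$ (the single-variable term) and its substitution-based multiplication \eqref{eq:general-subst-rule} act levelwise, so this is an isomorphism of monads on $\Set^\bN$ between $F_{\MCT_1}$ and the levelwise functor $X\mapsto(T(X(n)))_n$, equipped with its pointwise monad structure.

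For $\MCT_2$, adding $\close:(1\vbar 0)$ changes nothing at parameter level $0$: no $\MCT_2$-operation binds new parameters in its continuations, so the parameter context never grows as one descends into subterms. Thus $\close$, whose conclusion requires at least one parameter, can never appear inside a level-$0$ term. Hence the level-$0$ terms of $\MCT_2$ agree with those of $\MCT_1$, yielding $F_{\MCT_2}(X)(0)\cong T(X(0))$.

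To conclude, since $\upp A(0)=A$, applying $\downn{}$ to the above isomorphisms gives natural bijections $F'_{\MCT_i}(A)\cong T(A)$. The unit of $F_{\MCT_i}$ sends $a\in A=\upp A(0)$ to the class of the one-variable term, which corresponds to $\eta^T_A(a)$; and the substitution defining multiplication reduces, at level $0$ where parameter contexts are empty, to the ordinary substitution of $\MCT$-terms, which is $\mu^T_A$. The main subtle step, which I would write out carefully, is checking that the monad structure on the composite $\downn{}\circ F_{\MCT_i}\circ\upp{}$ defined via the adjunction $\upp{}\dashv\downn{}$ matches the $T$-monad structure transported along the functor isomorphism; for $\MCT_1$ this is immediate from the fact that $F_{\MCT_1}$ acts levelwise, and for $\MCT_2$ it follows once one observes that the action of $F_{\MCT_2}$ on $\upp A$ at level $0$ agrees with that of $F_{\MCT_1}$.
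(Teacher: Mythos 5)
Your proof is correct, and for the $\MCT_1$ half it is essentially the paper's argument (the paper phrases the levelwise identification $F_{\MCT_1}X(n)\cong T(X(n))$ semantically, via the observation that a $\MCT_1$-algebra is just a family of $\MCT$-algebras with no compatibility conditions, whereas you argue syntactically that the parameter context is invariant under all term formers; these are two views of the same fact). For the $\MCT_2$ half you take a genuinely different and more economical route: you observe that $\close$ can never occur in a level-$0$ term, so the level-$0$ fragment of $\MCT_2$ collapses to that of $\MCT_1$, giving $F_{\MCT_2}(X)(0)\cong T(X(0))$ directly. The paper instead computes the \emph{entire} free model on $\upp{A}$, identifying it as $\bar T A \coloneqq \lambda n.\,T^{n+1}A$ with $\close$ interpreted by the monadic unit, and verifies the universal property by inductively extending $T$-homomorphisms level by level. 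Your argument suffices for the proposition as stated and avoids the freeness verification entirely; what the paper's heavier computation buys is the explicit description of $F_{\MCT_2}(\upp{A})$ at all levels, which is reused later (in the proof of Thm.~\ref{thm:scoped-theory-from-pat}) and is not recoverable from your level-$0$ analysis. One small point you should make explicit: ``the level-$0$ terms of $\MCT_2$ agree with those of $\MCT_1$'' must be upgraded to an agreement of the derivable equalities at level $0$ before you may conclude $F_{\MCT_2}(X)(0)\cong T(X(0))$. This does hold, because every rule generating $=_{\MCT_2}$ whose conclusion sits in an empty parameter context has all its premises in an empty parameter context as well --- the congruence rule for $\close$ always produces conclusions at level at least $1$, and the substitution instances of the axioms that land at level $0$ are exactly the ordinary $\MCT$-instances --- but it deserves a sentence.
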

\begin{proof}
  We will prove something stronger.
  It is easy to see that for any $X \in \Set^\bN$, $(F_{\MCT_1} X)(n) \cong T(X(n))$, since a $\MCT_1$-algebra structure on $Y \in \Set^\bN$ is equivalent to a $\MCT$-algebra structure on each $Y(n)$ with no compatibility requirements.
  Then by definition of the functors $\upp{}$ and $\downn{\ }$ we obtain that $F'_{\MCT_1}(A)\cong T(A)$ for any set $A$.

  For $\MCT_2$, we can at least characterize $F_{\MCT_2}(\upp{A})$, for a set $A$.
  It is given by $\bar T A \coloneqq \lambda n.\, T^{n+1} A$, with the obvious interpretation of the algebraic operations in $\Sigma$, and $\close_{\bar T A,n} : T^{n+1} A \to T^{n+2} A$ given by the monadic unit $\eta_{T^{n+1} A}$.
  To see that $\bar T A$ is the free model of $\MCT_2$, given any $\MCT_2$-model $Y$ with a map $\upp{A} \to Y$, we define a map $\bar T A \to Y$ as follows.
  Since each $Y(n)$ is, in particular, a $T$-algebra, we get a unique $T$-homomorphism $TA \to Y(0)$.
  Now we can inductively construct $T$-homomorphisms $T^{n+2} A \to Y(n+1)$: composing $T^{n+1} A \to Y(n)$ constructed so far with $\close_{Y,n} : Y(n) \to Y(n+1)$, and extending uniquely to a $T$-homomorphism $T^{n+2} A \to Y(n+1)$.
  The resulting map $\bar T A \to Y$ a $\MCT_2$-homomorphism since it commutes with the interpretations of $\close$ by construction.
\end{proof}

This establishes the connection with monads for ordinary algebraic theories.
Finally, we will connect $F_\MCT$ with the scoped monad of Prop.~\ref{prop:pswj-monads-isomorphic} that was first given by \citet{DBLP:conf/lics/PirogSWJ18}.

\begin{theorem}\label{thm:noeqn}
  Consider signatures for algebraic $\Sigma$ and scoped $\Sigma'$ effects with no equations,
  inducing a parameterized algebraic theory $\MCT$  (via Rem.~\ref{rem:scoped-signature-to-parameterized-signature}).
  We have an isomorphism of monads \[F_\MCT\cong \paren{ \bar{\Sigma} + \paren{ \bar{\Sigma'} \circ \earlier } + \later }^*.\]
\end{theorem}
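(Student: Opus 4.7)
The plan is to identify both monads with the free-forgetful construction over the same category of algebras on $\Set^\bN$ and invoke the uniqueness of left adjoints. The key observation is that, in the absence of equations, $H$-algebras and $\MCT$-models coincide over $\Set^\bN$, where $H = \bar\Sigma + (\bar{\Sigma'} \circ \earlier) + \later$.

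First, I would unpack $H$-algebras. A morphism $HX \to X$ in $\Set^\bN$ decomposes as a triple $\bar\Sigma X \to X$, $\bar{\Sigma'}(\earlier X) \to X$, $\later X \to X$. Writing these out pointwise, this triple is equivalent to giving: for each algebraic operation $o \in \Sigma$ of arity $k$, a family $o_{X,n}: X(n)^k \to X(n)$; for each scoped operation $s \in \Sigma'$ of arity $k$, a family $s_{X,n}: X(n+1)^k \to X(n)$; and a family $\close_{X,n}: X(n) \to X(n+1)$ (where the $n=0$ component of $\later X \to X$ is vacuous since $(\later X)(0) = \emptyset$). Comparing with the parameterized arities $(0 \vbar 0\sdots 0)$, $(0 \vbar 1\sdots 1)$, and $(1 \vbar 0)$ produced by \Cref{rem:scoped-signature-to-parameterized-signature}, and computing $\den{\op}_{\MCX,n}:X(n+m_1)\times\cdots\times X(n+m_k)\to X(n+p)$ from \Cref{def:sigma-struct}, this is exactly the data of a $\Sigma$-structure. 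An analogous check shows that morphisms of $H$-algebras correspond bijectively with morphisms of $\Sigma$-structures, so the respective forgetful functors to $\Set^\bN$ are isomorphic as functors.

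Second, since $\MCT$ arises from the translation with no extra equations, \Cref{def:model} imposes no further constraint, so $\MCT$-models are exactly $\Sigma$-structures. Hence the Eilenberg--Moore category of $H^*$ is isomorphic to the category of $\MCT$-models, compatibly with the forgetful functors to $\Set^\bN$. Both functors are monadic with explicit left adjoints: $H^*$ by the standard free monad construction on a finitary endofunctor (applicable since $H$ is a sum of polynomial functors and the shift $\later$), and $F_\MCT$ by \Cref{prop:free-models}. Uniqueness of left adjoints up to natural isomorphism yields an isomorphism of functors $F_\MCT \cong H^*$, and since the monadic unit and multiplication are determined by the adjunction data, this is an isomorphism of monads.

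The main care needed is in the first step, where the shift functors $\later$ and $\earlier$ must be correctly matched against the $(n+p)$ and $(n+m_i)$ indexing coming from the parameterized arities --- in particular, verifying that $(\later X)(0) = \emptyset$ is the right behaviour because $\close$ has arity $(1\vbar 0)$ and so cannot be applied at an empty parameter context. Once this bookkeeping is done, the remainder is a routine application of uniqueness of free constructions; no explicit normal form for terms nor a combinatorial description of $H^*$ is required.
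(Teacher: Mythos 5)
Your proof is correct, but it takes a genuinely different route from the paper's. The paper argues at the level of elements: it uses the explicit description of $F_\MCT X(n)$ as equivalence classes of terms and matches the outermost constructor of each term --- algebraic operation, scoped operation, or $\close$ --- against the three summands $\bar{\Sigma}$, $\bar{\Sigma'}\circ\earlier$ and $\later$, so the isomorphism is exhibited directly on the syntactic descriptions of both sides (the same arity bookkeeping you perform, but packaged as an analysis of term formation rather than of algebra structures). You instead work at the level of universal properties: $H$-algebras for $H = \bar{\Sigma} + \bar{\Sigma'}\circ\earlier + \later$ are exactly $\Sigma$-structures, hence exactly $\MCT$-models since there are no equations, and the conclusion follows from uniqueness of left adjoints to isomorphic forgetful functors. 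Your route buys a cleaner logical skeleton --- it needs no combinatorial description of $H^*$ or of normal forms, and it isolates exactly where the ``no equations'' hypothesis enters, namely in identifying $\MCT$-models with plain $\Sigma$-structures --- at the cost of leaning on the standard fact that the free monad on an accessible endofunctor is \emph{algebraically} free, i.e.\ that its Eilenberg--Moore category coincides with the category of $H$-algebras; this does hold here, since $\earlier$ and $\later$ preserve colimits and so $H$ is finitary, but it deserves explicit mention. The paper's more concrete argument has the advantage of describing the isomorphism explicitly on terms, which is what its later freeness proofs actually use.
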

\begin{proof}
  To see this, we use the description of $F_\MCT X(n)$ as a set of equivalence classes of $\MCT$-terms with computation variables coming from $X$.
  We show that $F_\MCT X(n)$ is isomorphic to the set $\paren{ \bar{\Sigma} + \paren{ \bar{\Sigma'} \circ \earlier } + \later }^*(X)(n)$.
  Consider the outermost operation of a term in $F_\MCT X(n)$: it is either algebraic, scoped or $\close$. 
  Each component of the sum on the right-hand-side, $\bar{\Sigma}$, $\paren{ \bar{\Sigma'} \circ \earlier }$ and $\later$  corresponds to one of these three possibilities.
  Scoped operations bind a parameter hence the need for the $\earlier$ functor which increases the index $n$ by $1$.
  The $\close$ operation consumes a parameter which corresponds to the fact that $\later$ decreases the index $n$ by $1$.
  Note that according to~\Cref{def:sigma-struct}, for an arbitrary carrier $Y$, the interpretation of $\close$ has type $Y\rightarrow \earlier Y$, which is equivalently a morphism  $\later Y \rightarrow Y$.
  Both $\earlier$/$\later$ are characterized in Rem.~\ref{rem:models-via-enriched-categories} in terms of the Day tensor of $\Set^\bN$.
\end{proof}

  

\subsection{Free Models for Scoped Effects}\label{sec:free-models-scoped}

We now turn to some concrete models from \citet{DBLP:conf/lics/PirogSWJ18}.
To characterize them as certain free models of parameterized algebraic theories, we need the following notion.
\begin{definition}
  $X \in \Set^\bN$ is \emph{truncated} if $X(n+1) = \emptyset$ for all $n \in \bN$.
\end{definition}
Equivalently, $X$ is truncated if $X = \upp (X(0))$.
The free model on a truncated $X$ corresponds to the case where computation variables can only denote programs with no open scopes.
This is the case in the development of \citet{DBLP:conf/lics/PirogSWJ18}, where if the programmer opens a scope, a matching closing of the scope is implicitly part of the program.

\subsubsection{Nondeterminism with $\once$}\label{sec:nond-with-once-model}
Recall the parameterized theory for nondeterminism with $\once$ (signature in Ex.~\ref{ex:nondet-once-sig} and equations in Fig.~\ref{fig:nondet-once}).
It follows from Prop.~\ref{prop:free-models} that this theory has a free model on each $X$ in $\Set^\bN$, with carrier denoted by $\tonce(X) \in \Set^\bN$.
For $X$ truncated, the free model on $X$ has an elegant description:
\begin{displaymath}
  \tonce(X)(n) = \listf^{n+1}(X(0)).
\end{displaymath}
Here $\listf(A)$ is the set of lists over $A$. (So $\listf^{n+1}(X(0))$ can be thought of as the set of balanced trees of depth exactly~$n+1$ but with arbitrary degree, and leaves labelled by $X(0)$, where we distinguish between nodes with zero degree and leaf nodes.)

In this case the interpretation of $\once$ chooses the first element of a list and closing a scope wraps its continuation as a singleton list.
Choice is interpreted as list concatenation ($\concat$), and failure as the empty list $[]$:
\begin{align*}
  &\once_n: \tonce(X)(n+1)\rightarrow \tonce(X)(n)  &&\once_n([]) = [],\ \once_n([x,\ldots]) = x \\
  &\close_n :\tonce(X)(n) \rightarrow\tonce(X)(n+1)\ &&\close_n(x)=[x] \\
  &\orop_n :\tonce(X)(n) \times \tonce(X)(n) \rightarrow \tonce(X)(n) &&\orop_n(x_1,x_2) = x_1 \concat x_2 \\
  &\fail_n : 1\rightarrow \tonce(X)(n) &&\fail_n()=[]
\end{align*}

In fact the model $\tonce(X)$ we just described is the same as the model for nondeterminism of \citet[Example 4.2]{DBLP:conf/lics/PirogSWJ18}:
\begin{theorem}\label{thm:nondet-free-model}
  For a truncated $X\in\Set^\bN$, $\tonce(X)$ is the free model of the parameterized theory of nondeterminism with $\once$~(\Cref{exa:nondet-once-eq} and \Cref{fig:nondet-once}).
  Moreover, the model for nondeterminism with $\once$ from~\cite[Example~4.2]{DBLP:conf/lics/PirogSWJ18}, starting from a set $A$, is the free model on $\upp A\in\Set^\bN$.
\end{theorem}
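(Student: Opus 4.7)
The plan is to verify the universal property of $\tonce(X)$ from Def.~\ref{def:freeness} directly, which is more illuminating than constructing an isomorphism with the abstract free model $F_\MCT X$ from Prop.~\ref{prop:free-models}. The work decomposes into three steps: showing $\tonce(X)$ is a model of the theory, specifying the unit $\eta_X : X \to \tonce(X)$, and verifying existence and uniqueness of extensions to any other model. As a first step, I would verify that each equation in Fig.~\ref{fig:nondet-once} is satisfied by $\tonce(X)$ under the given interpretations. This reduces to routine facts about lists: \cref{eq:3,eq:3a,eq:3b} become the monoid laws for $(\listf, \concat, [])$; \cref{eq:4a,eq:4c} are immediate; \eqref{eq:4b} says that the head of $l\concat l$ agrees with the head of $l$ (case-splitting on whether $l$ is empty); and \eqref{eq:4} says that the head of $[x]\concat y$ is $x$.

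For the universality step, the unit has components $\eta_{X,0}(c) = [c]$, with higher components forced by the truncation hypothesis $X(n+1)=\emptyset$. Given a model $\MCY$ and a morphism $\phi : X \to Y$ in $\Set^\bN$, uniqueness of the extension $\hat\phi : \tonce(X) \to \MCY$ is straightforward: every element of $\tonce(X)(n) = \listf^{n+1}(X(0))$ can be built from the image of $\eta_X$ using only $\orop$, $\fail$, and $\close$, so any homomorphism extending $\phi$ is forced by structural induction. For existence, I would define $\hat\phi$ by induction on $n$: the base case $\hat\phi_0$ sends $[c_1,\ldots,c_k]$ to the right-associated $\orop_{\MCY,0}$-expression in the $\phi_0(c_i)$ terminated by $\fail_{\MCY,0}$; and at $n \geq 1$, $\hat\phi_n$ sends $[s_1,\ldots,s_k]$ to the analogous expression built from the atoms $\close_{\MCY,n-1}(\hat\phi_{n-1}(s_i))$.

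The main obstacle is then to check that this $\hat\phi$ commutes with every operation in the signature. Compatibility with $\orop$, $\fail$, and $\close$ follows cleanly from associativity and the right-unit law~\eqref{eq:3a} in $\MCY$, together with the definition of $\hat\phi$. Compatibility with $\once$ is the substantive case: for a nonempty list $[s_1,\ldots,s_k]$, unfolding $\hat\phi_{n+1}([s_1,\ldots,s_k])$ and then applying $\once_{\MCY,n}$ yields $\hat\phi_n(s_1)$ by equation~\eqref{eq:4} applied in $\MCY$, which matches $\hat\phi_n(\once_n([s_1,\ldots,s_k]))$; the empty-list case uses~\eqref{eq:4a}. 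Finally, for the second claim of the theorem, one observes that $(\upp A)(0) = A$ with all higher components empty, so $\tonce(\upp A)(n) = \listf^{n+1}(A)$, and the operation interpretations coincide with those of the Pir\'og et al.\ model by direct inspection; hence that model is free on $\upp A$.
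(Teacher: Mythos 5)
Your proof is correct, and it takes a genuinely different (if closely parallel) route from the paper's. The paper leans on Prop.~\ref{prop:free-models}: it obtains the canonical homomorphism $\rho:\freeonce(X)\to\tonce(X)$ from the universal property of the syntactic free model, defines a normal-form map $\sigma:\tonce(X)\to\freeonce(X)$, and proves $\rho$ and $\sigma$ are mutually inverse homomorphisms. You instead verify the universal property of Def.~\ref{def:freeness} for $\tonce(X)$ directly: your $\hat\phi$ is in effect ``$\sigma$ followed by evaluation in $\MCY$'', so the substantive computation --- that interpreting the normal forms commutes with the operations, with $\once$ handled by \eqref{eq:4} and \eqref{eq:4a} --- is the same in both proofs, but your uniqueness step (every element of $\listf^{n+1}(X(0))$ is generated from $\eta_X(X)$ by $\orop$, $\fail$, $\close$) replaces the paper's two identity checks $\rho\circ\sigma=\mathrm{id}$ and $\sigma\circ\rho=\mathrm{id}$. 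What your approach buys is self-containedness and the absence of any reasoning about the quotient by $=_{\MCT}$ on the syntactic side, since $\hat\phi$ is defined by structural recursion on honest lists and is trivially well defined; what the paper's approach buys is that $\rho$ comes for free and the same isomorphism-of-models template is reused verbatim for the exceptions, state, and cut examples. One small bookkeeping point: for compatibility of $\hat\phi$ with $\orop$ (i.e.\ with concatenation) you need the \emph{left}-unit law \eqref{eq:3b} to absorb the terminating $\fail$ of the first list's expression into the head of the second, while the \emph{right}-unit law \eqref{eq:3a} is what you need for compatibility with $\close$ (singleton lists); you cite only \eqref{eq:3a}, but both laws are in the theory, so this is a labelling slip rather than a gap.
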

\begin{proof}[Proof notes]
  It is easy to show that the operations on $\tonce(X)$ defined in this section satisfy the equations in~\Cref{fig:nondet-once}, and therefore $\tonce(X)$ is a model.
  Denote by $\freeonce(X)$ the free model in the sense of~\Cref{prop:free-models}.
  We apply the definition of freeness to the map $X\rightarrow \tonce(X)$ that sends each $c\in X(0)$ to the singleton list $[c]$, to obtain a homomorphism of models $\rho:\freeonce(X)\rightarrow \tonce(X)$.
  To show $\tonce(X)$ is free, we show that $\rho$ has an inverse $\sigma:\tonce(X)\rightarrow \freeonce(X)$.

  Recall that $\freeonce(X)(n)$ contains equivalence classes of terms in context, with computation variables taken from $X$.
  We use representatives of these equivalence classes where the context $\Gamma_X$ contains a variable $(x:0)$ for each element of $X(0)$.
  We define $\sigma$ by induction:
  \begin{align*}
    &\sigma_n([]) = \Gamma_X \vbar a_1,\sdots,a_n \vdash \fail \\
    &\sigma_0(\mathsf{x::xs}) = \Gamma_X \vbar - \vdash \orop(\mathsf{x},\sigma_0(\mathsf{xs}))\\
    &\sigma_{n+1}(\mathsf{x::xs}) = \Gamma_X \vbar a_1,\sdots,a_{n+1} \vdash \orop\bigl(\close(a_{n+1},\sigma_n(x)),\sigma_{n+1}(\mathsf{xs}) \bigr)
  \end{align*}
  The terms in the image of $\sigma$ are essentially normal forms for the parameterized theory, when computation variables only have arity $0$.

  We show that $\sigma$ is a homomorphism of models, meaning it commutes with the operations in the theory, using the equations in~\Cref{fig:nondet-once}.
  The case for $\once$ also requires an induction on lists.

  To show $\rho\circ\sigma = \mathsf{id}_{\tonce(X)}$, we use induction on $n\in\bN$, followed by induction on lists, and the fact that $\rho$ is a homomorphism.
  To show $\sigma\circ\rho = \mathsf{id}_{\freeonce(X)}$, we do induction on the term formation rules, and use the fact that $\sigma$ is a homomorphism.
\end{proof}

\subsubsection{Exceptions}
Recall the parameterized theory of throwing and catching exceptions in Ex.~\ref{exa:catch-signature} and (\ref{eq:8}--\ref{eq:9}).
For truncated $X\in\Set^\bN$, the free model of the theory of exceptions has carrier:
\begin{displaymath}
  \tcatch(X)(n) = X(0) + \{e_0,\dots,e_n\}
\end{displaymath}%
where $e_{n-i}$ corresponds to the term (in normal form) that closes $i$ scopes then throws.

To define the operations $\catch_n$ and $\close_n$ we pattern match on the elements of $\tcatch(X)(n+1)$ using the isomorphism $\tcatch(X)(n+1)\cong \tcatch(X)(n)+\{e_{n+1}\}$.
Below, $x$ is an element of $\tcatch(X)(n)$, standing for a computation in normal form: 
\[
  \begin{array}{l}
    \begin{aligned}
      &\catch_n:\tcatch(X)(n+1)\times \tcatch(X)(n+1) \rightarrow \tcatch(X)(n) \\
      &\catch_n(x,-)= x, \
      \catch_n(e_{n+1},\,x) = x,\
      \catch_n(e_{n+1},\,e_{n+1})=e_n 
    \end{aligned}\\ \\%
    \begin{aligned}
      &\close_n:\tcatch(X)(n)\rightarrow \tcatch(X)(n+1) && \close_n(x)=x \\
      &\throw_n:1\rightarrow \tcatch(X)(n) && \throw_n()=e_n
    \end{aligned}
  \end{array}
\]
The cases in the definition of $\catch_n$ correspond to  equations (\ref{eq:9}), (\ref{eq:8}), (\ref{eq:10}) respectively.
In the third case, an exception  inside $n+1$ scopes in the second argument of $\catch$ becomes an exception inside $n$ scopes.

\begin{theorem}\label{thm:catch-free-model}
  For a truncated $X\in\Set^\bN$, $\tcatch(X)$ is the free model for the parameterized theory of exceptions (\ref{eq:8}--\ref{eq:9}).
  The model for exception catching from~\cite[Example 4.5]{DBLP:conf/lics/PirogSWJ18}, which starts from a set $A$, is  the free model on $\upp A \in\Set^\bN$.
\end{theorem}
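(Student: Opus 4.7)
The plan is to mirror the proof strategy of~\Cref{thm:nondet-free-model}. First, verify that the structure $\tcatch(X)$ with the interpretations of $\catch$, $\close$, and $\throw$ given above satisfies equations~(\ref{eq:8})--(\ref{eq:9}), so that it is a model of the parameterized theory of exceptions. Applying the definition of freeness (\Cref{def:freeness}) to the map $X \to \tcatch(X)$ embedding $X(0)$ into the left summand of $\tcatch(X)(0) = X(0) + \{e_0\}$, we obtain a homomorphism of models $\rho : \freecatch(X) \to \tcatch(X)$. To establish freeness of $\tcatch(X)$, I will exhibit an inverse $\sigma : \tcatch(X) \to \freecatch(X)$ by sending each element to a normal-form representative.

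Let $\Gamma_X$ be a computation context with a variable $x:0$ for each element of $X(0)$, and define
\begin{align*}
  \sigma_n(x) &= \Gamma_X \vbar a_1, \ldots, a_n \vdash \close(a_n, \ldots, \close(a_1, x) \ldots), \\
  \sigma_n(e_j) &= \Gamma_X \vbar a_1, \ldots, a_n \vdash \close(a_n, \ldots, \close(a_{j+1}, \throw) \ldots),
\end{align*}
for $x \in X(0)$ and $0 \leq j \leq n$, where an empty nesting of closes is taken to be the identity, so that $\sigma_0(x) = x$ and $\sigma_n(e_n) = \throw$. The terms in the image of $\sigma$ are exactly the normal forms one expects for this theory: a stack of closes terminated by either a computation variable or $\throw$.

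The central step is to verify that $\sigma$ is a homomorphism of models. Commutation with $\throw$ and $\close$ is immediate from the shape of the normal forms. The main case is $\catch$: for $u, v \in \tcatch(X)(n+1) \cong \tcatch(X)(n) + \{e_{n+1}\}$, one must show $\sigma_n(\catch_n(u, v)) = \catch(a.\sigma_{n+1}(u),\, b.\sigma_{n+1}(v))$. When $u$ lies in the pure summand $\tcatch(X)(n)$, the term $\sigma_{n+1}(u)$ begins with a $\close$ of the freshly bound parameter, and equation~(\ref{eq:9}), weakened to the ambient context $a_1, \ldots, a_n$, reduces the right-hand side to $\sigma_n(u)$; when $u = e_{n+1}$ but $v$ lies in the pure summand, equation~(\ref{eq:8}) applies; when both $u$ and $v$ equal $e_{n+1}$, equation~(\ref{eq:10}) applies, yielding $\throw = \sigma_n(e_n)$. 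Once $\sigma$ is a homomorphism, $\rho \circ \sigma = \mathrm{id}_{\tcatch(X)}$ is a direct case analysis using that $\rho$ commutes with $\close$ and $\throw$, and $\sigma \circ \rho = \mathrm{id}_{\freecatch(X)}$ follows from the uniqueness clause of \Cref{def:freeness}: both $\sigma \circ \rho$ and the identity are homomorphisms of models extending $\eta_X$. The second statement is then a direct comparison between $\freecatch(\upp A)$ and the model in \cite[Example~4.5]{DBLP:conf/lics/PirogSWJ18}. The main obstacle is the case analysis for $\catch$, which requires careful bookkeeping of the embedding $\tcatch(X)(n) \hookrightarrow \tcatch(X)(n+1)$ and of the weakening of equations~(\ref{eq:8})--(\ref{eq:10}) to the ambient parameter context.
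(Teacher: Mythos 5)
Your proposal is correct and follows essentially the same route as the paper's proof: the same normal-form map $\sigma$ (a stack of $\close$'s terminated by a variable or $\throw$), the same homomorphism $\rho$ induced by freeness, and the same use of equations~(\ref{eq:8})--(\ref{eq:10}) in the $\catch$ case. The only (harmless) variation is that you discharge $\sigma \circ \rho = \mathrm{id}$ via the uniqueness clause of the universal property after showing $\sigma$ is a homomorphism, whereas the paper does an explicit induction on term formation; both are standard and equivalent here.
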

\begin{proof}[Proof notes]
  We follow the same outline as for the proof of~\Cref{thm:nondet-free-model}.
  Consider the map $X\rightarrow \tcatch(X)$ that maps each element of $X(0)$ to itself.
  This has a unique extension to a map out of the free model on $X$, $\rho:\freecatch(X)\rightarrow\tcatch(X)$.
  We define $\sigma:\tcatch(X)\rightarrow\freecatch(X)$ to be the candidate inverse for $\rho$.
  Note that it is enough to show that $\sigma$ is an inverse in $\Set^\bN$.
  Using the intuition of normal forms and the isomorphism $\tcatch(X)(n) \cong X(0) + \{e_0,\dots,e_n\}$:
  \begin{align*}
    &\sigma_n(x) = \Gamma_X \vbar a_1,\sdots,a_n \vdash \close(a_n,\sdots \close(a_1,x)\sdots) \\
    &\sigma_n(e_{n-i}) =\Gamma_X \vbar a_1,\sdots,a_n \vdash \close(a_{n+1},\sdots \close(a_{n-i+1},\throw) \sdots), \quad \text{for each } 0\leq i\leq n
  \end{align*}

  If we consider an element $z\in\tcatch(X)(n)$, and the isomorphism $\tcatch(X)(n+1)\cong \tcatch(X)(n)+\{e_{n+1}\}$, the following equation holds:
  \begin{equation}\label{eq:17}
    \sigma_{n+1}(z) = \Gamma_X \vbar a_1,\sdots,a_{n+1} \vdash \close(a_{n+1},\sigma_n(z))
  \end{equation}

  To show $\rho\circ\sigma = \mathsf{id}_{\tcatch(X)}$ we use that $\rho$ is a homomorphism.
  For $\sigma\circ\rho = \mathsf{id}_{\freecatch(X)}$, we proceed by induction on the term formation rules.
  We use~\cref{eq:17}, and in the $\catch$ case we also use the~\crefrange{eq:8}{eq:9} from the parameterized theory.
\end{proof}

\subsubsection{State with local values}
Recall the parameterized theory of mutable state with local values in Ex.~\ref{ex:state-local-sig} and its equations (\ref{eq:5}--\ref{eq:7}).
The free model, in the sense of Prop.~\ref{prop:free-models}, on a truncated $X\in \Set^\bN$ has carrier:
  \begin{align*}
    \tlocal(X)(0) &= \big(\Bit\Rightarrow X(0)\times \Bit\big) &
    \tlocal(X)(n+1) &= \big(\Bit\Rightarrow \tlocal (X)(n))\big)
  \end{align*}
  The operations on this model are
  \begin{align*}
    &\local{i}_n : \tlocal(X)(n+1)\rightarrow \tlocal(X)(n) && \local{i}_n(f)= (f\,i) \\
    &\close_n : \tlocal(X)(n)\rightarrow \tlocal(X)(n+1) && \close_n(f) = \lambda s.\,f \\
    &\putop{i}_n:\tlocal(X)(n) \rightarrow \tlocal(X)(n) && \putop{i}_n(f) = \lambda s.\,f\,i  \\
    &\get_n: \tlocal(X)(n)^2 \rightarrow \tlocal(X)(n) && \get_n (f,g)\ =\  \lambda s.\ \  \raisebox{0pt}[0pt]{$\begin{cases}f\ s & s = 0 \\ g\ s & \text{otherwise}\end{cases}$}
  \end{align*}
  Notice that the continuation of $\local{i}$ uses the new state $i$, whereas $\close$ discards the state $s$ which comes from the scope that is being closed.

  If we only consider equations~(\ref{eq:5}--\ref{eq:6}), omitting~\eqref{eq:7}, the carrier of the free model on a truncated $X\in\Set^\bN$ is:
  \begin{mathpar}
    \tlocal'(X)(0) = \bigl(\Bit\Rightarrow X(0)\times \Bit\bigr)  = \tlocal(X)(0)\and
    \tlocal'(X)(n+1) = \bigl(\Bit\Rightarrow \tlocal' (X)(n) \times \Bit\bigr)
  \end{mathpar}
  In fact, $\tlocal'(X)$ is the model of state with $\local{}$ proposed in~\cite[\S 7.1]{DBLP:conf/lics/PirogSWJ18}:

  \begin{theorem}\label{thm:state-free-model}
    Given a truncated $X\in\Set^\bN$, $\tlocal(X)$ is the free model of the parameterized theory of state with local values,~\crefrange{eq:5}{eq:7}.
    Moreover, $\tlocal'(X)$ is the free model for~\crefrange{eq:5}{eq:6}.
    
    Consider the example of state with local variables from \citet[\S 7.1]{DBLP:conf/lics/PirogSWJ18}, specialized to one memory location storing one bit, reading the return type (there `$a$') as a set $A$.
    The model proposed in~\cite[\S 7.1]{DBLP:conf/lics/PirogSWJ18} is the free model on $\upp A$ for the parameterized theory with equations~(\ref{eq:5}--\ref{eq:6}).
  \end{theorem}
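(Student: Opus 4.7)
The plan is to follow the pattern of Thms.~\ref{thm:nondet-free-model} and~\ref{thm:catch-free-model}. First I would verify by direct calculation that $\tlocal(X)$ with the operations defined above satisfies each of~(\ref{eq:5}--\ref{eq:7}), so that it is a model. Applying freeness of $\freelocal(X)$ to the canonical map $X \to \tlocal(X)$ sending $c\in X(0)$ to $\lambda s.(c,s)$ then yields a homomorphism $\rho\colon \freelocal(X) \to \tlocal(X)$, and it remains to exhibit a two-sided inverse $\sigma\colon \tlocal(X) \to \freelocal(X)$ in $\Set^\bN$.

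I would define $\sigma$ by recursion on $n$, giving representatives of normal-form equivalence classes. Writing $\Gamma_X$ for a context containing one computation variable for each element of $X(0)$: for $f\in\tlocal(X)(0)$ with $f(i) = (x_i, s_i)$, set
\[\sigma_0(f) = \Gamma_X \vbar - \vdash \get(\putop{s_0}(x_0),\, \putop{s_1}(x_1)),\]
and for $f\in\tlocal(X)(n+1)$ set
\[\sigma_{n+1}(f) = \Gamma_X \vbar a_1,\sdots,a_{n+1} \vdash \get\bigl(\close(a_{n+1},\sigma_n(f(0))),\ \close(a_{n+1},\sigma_n(f(1)))\bigr).\]
Intuitively, a normal form reads the innermost state and then either closes the current scope and recurses (for $n\geq 1$) or writes the output state and returns a value (at $n=0$); equation~(\ref{eq:7}) is precisely what makes it unnecessary to record any $\putop$ before a $\close$.

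The central step is to prove $\sigma\circ\rho = \mathsf{id}_{\freelocal(X)}$ by induction on term formation; the variable and $\close$ cases proceed as in Thm.~\ref{thm:catch-free-model}, and the remaining cases reduce to showing that $\sigma$ is a homomorphism of $\Sigma$-structures. The reverse identity $\rho\circ\sigma = \mathsf{id}_{\tlocal(X)}$ is then direct once $\rho$ is known to be a homomorphism. I expect the main obstacle to be the homomorphism check for $\sigma$, which requires a careful interplay of equations: the $\putop{i}$ case relies on the auxiliary identity $\get(y,y) =_\MCT y$, derivable from~(\ref{eq:5}), the put-get equation and~(\ref{eq:5a}), together with~(\ref{eq:7}) to eliminate a $\putop$ immediately preceding a $\close$; the $\local{i}$ case chains the local-get and local-close equations; and the $\get$ case uses the $\get$-$\get$ collapse sketched in Ex.~\ref{sec:algebraic-effects}.

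For $\tlocal'(X)$, without~(\ref{eq:7}) a $\putop$ preceding a $\close$ can no longer be erased, so the carrier must track an output state at each closing; the normal form at level $n+1$ accordingly becomes
\[\sigma'_{n+1}(f) = \Gamma_X \vbar a_1,\sdots,a_{n+1} \vdash \get\bigl(\putop{t_0}(\close(a_{n+1},\sigma'_n(g_0))),\ \putop{t_1}(\close(a_{n+1},\sigma'_n(g_1)))\bigr)\]
where $f(i) = (g_i, t_i)$, and the rest of the argument is parallel. The third claim then follows from the second together with the already-noted identification of $\tlocal'(\upp A)$ with the model of~\cite[\S 7.1]{DBLP:conf/lics/PirogSWJ18} specialised to one bit of storage.
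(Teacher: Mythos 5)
Your proposal is correct and follows essentially the same route as the paper's proof: the same normal forms $\sigma_0(f) = \get(\putop{s_0}(x_0),\putop{s_1}(x_1))$ and $\sigma_{n+1}(f) = \get(\close(a_{n+1},\sigma_n(f\,0)),\close(a_{n+1},\sigma_n(f\,1)))$, the same homomorphism $\rho$ induced by $c \mapsto \lambda s.(c,s)$, and the same strategy of verifying that $\sigma$ is a homomorphism and that the two composites are identities. Your extra detail on which equations drive each case of the homomorphism check, and the explicit normal form with a $\putop$ before the $\close$ for $\tlocal'(X)$, are consistent elaborations of what the paper leaves implicit.
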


  \begin{proof}[Proof notes]
    We show that $\tlocal(X)$ is the free model following the same proof outline as for~\Cref{thm:nondet-free-model,thm:catch-free-model}.
    It is easy to equip $\tlocal'(X)$ with operations and do a similar proof of freeness.

    Consider the map $X\rightarrow \tlocal(X)$ that sends each $x\in X(0)$ to the function $\lambda s.\,(x,s)$, and let $\rho:\freelocal(X)\rightarrow\tlocal(X)$ be its unique extension out of the free model.
    We define a candidate inverse for $\rho$ using the intuition of normal forms, $\sigma:\tlocal(X)\rightarrow\freelocal(X)$:
    \begin{align*}
      &\sigma_0(f) = \Gamma_X \vbar - \vdash \get\bigl( \putop{\pi_2(f\, 0)}(\pi_1(f\, 0)),\, \putop{\pi_2(f\, 1)}(\pi_1(f\, 1)) \bigr) \\
      &\sigma_{n+1}(f) = \Gamma_X \vbar a_1,\sdots,a_{n+1} \vdash \get\bigl( \close(a_{n+1},\sigma_n(f\,0)),\, \close(a_{n+1},\sigma_n(f\,1)) \bigr)
    \end{align*}
    where we identified the elements of $X(0)$ with the variables in $\Gamma_X$.

    We can show that $\sigma$ is a homomorphism using~\crefrange{eq:5}{eq:7} from the parameterized theory.
    To show $\rho\circ\sigma = \mathsf{id}_{\tlocal(X)}$, we proceed by induction on $n\in\bN$ and use the fact that $\rho$ is a homomorphism.
    For $\sigma\circ\rho = \mathsf{id}_{\freelocal(X)}$, we use induction on terms and that both $\rho$ and $\sigma$ are homomorphisms.
    For the variable case, we use~\cref{eq:5}.
  \end{proof}

  The interpretations in $\tlocal(X)$ and $\tlocal'(X)$, i.e~that of \citet{DBLP:conf/lics/PirogSWJ18}, of programs with no open scopes agree:
  \begin{proposition}\label{prop:state-local-interp-coincide}
  Consider a fixed context of computation variables $\Gamma=(x_1:0,\sdots,x_n:0)$ and a truncated $X\in\Set^\bN$.
  For any term $\Gamma\vbar - \vdash t$, the following two interpretations coincide at index $0$:
  \begin{displaymath}
    \den{t}_{\tlocal(X),0} = \den{t}_{\tlocal'(X),0} : \tlocal(X)(0)^n\rightarrow \tlocal(X)(0),
  \end{displaymath}
  under the identification $\tlocal(X)(0)=\tlocal'(X)(0)$.
\end{proposition}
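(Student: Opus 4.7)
The plan is to prove the proposition by introducing a logical relation between $\tlocal(X)$ and $\tlocal'(X)$ that is preserved by all four operations and that collapses to equality at level~$0$.

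I would define a family of relations $R_k \subseteq \tlocal(X)(k) \times \tlocal'(X)(k)$ by induction on $k$: set $R_0$ to be equality (using the identification $\tlocal(X)(0) = \tlocal'(X)(0)$), and declare $R_{k+1}(f, f')$ to hold iff $R_k(f(s), \pi_1(f'(s)))$ for every $s \in \Bit$. Intuitively, $\tlocal'$ tracks the outgoing state value that $\close$ stores, whereas $\tlocal$ discards it; the projection $\pi_1$ in the definition of $R$ erases precisely this extra information.

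Next, I would verify that each operation preserves $R$. The decisive case is $\close_k$: given $R_k(x, x')$, we have $\close_k(x) = \lambda s.\,x$ in $\tlocal(X)$ and $\close_k(x') = \lambda s.\,(x', s)$ in $\tlocal'(X)$, so $\pi_1((x', s)) = x'$ and the hypothesis directly yields $R_{k+1}(\close_k(x), \close_k(x'))$. For $\local{i}_k$: given $R_{k+1}(f, f')$, the interpretations are $\local{i}_k(f) = f(i)$ and $\local{i}_k(f') = \pi_1(f'(i))$, and the required $R_k$ is the instance $s = i$ of the hypothesis. The cases for $\putop{j}_k$ and $\get_k$ are routine case analyses on $s \in \Bit$, since the underlying formulas coincide in the two models.

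The proposition then reduces to the following generalization, proved by structural induction on terms: for any term $\Gamma \vbar \Delta \vdash t$ with $\Gamma = (x_1{:}0, \sdots, x_n{:}0)$, any $k \in \bN$, and any tuples $\vec\alpha \in \tlocal(X)(k)^n$, $\vec\alpha' \in \tlocal'(X)(k)^n$ satisfying $R_k(\alpha_i, \alpha'_i)$ for each $i$, the outputs $\den{t}_{\tlocal(X),k}(\vec\alpha)$ and $\den{t}_{\tlocal'(X),k}(\vec\alpha')$ are $R_{k+|\Delta|}$-related. The variable case is immediate since the interpretation is a projection; the operation case combines the inductive hypotheses on the subterms (whose parameter contexts grow from $\Delta$ to $\Delta, b_1, \sdots, b_{m_i}$) with the preservation results above, instantiated at level $k + |\Delta|$. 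Specializing to $\Delta$ empty, $k = 0$, and $\vec\alpha = \vec\alpha'$ (which is $R_0$-related to itself trivially) yields the proposition. I expect no substantive obstacle beyond bookkeeping; the conceptual content lies in arranging $R$ to capture the cancellation between $\close$ storing the state and $\local{i}$ discarding it.
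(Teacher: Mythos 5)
Your proposal is correct and follows essentially the same route as the paper's proof: a logical relation indexed by the depth of the parameter context, defined as equality at level $0$ and via the $\pi_1$-projection over all $s\in\Bit$ at successor levels, together with a fundamental property established by structural induction on terms. The only (immaterial) difference is that you define the relation on elements of $\tlocal(X)(k)\times\tlocal'(X)(k)$ and lift it pointwise to related inputs, whereas the paper defines $\MCR^p$ directly on the function spaces $\tlocal X(0)^n\rightarrow\tlocal X(p)$ with the term index fixed at $0$.
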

\begin{proof}
  The $\local{}$ operation binds parameters so we use a logical relation indexed by the length of the parameter context $p\in\bN$.
  The relation has type
  \begin{displaymath}
    \MCR^p \subseteq \bigl(\tlocal X(0)^n \rightarrow \tlocal X(p)\bigr) \times \bigl(\tlocal X(0)^n \rightarrow \tlocal' X(p)\bigr)
  \end{displaymath}
  and is defined as
  \begin{align*}
    \MCR^0 &= \{(f,f) \vbar f: \tlocal X(0)^n \rightarrow \tlocal X(0)\} \\
    \MCR^{p+1} &= \bigl\{(f,g) \vbar 
    \forall s\in\Bit.\,\bigl(\lambda(x_1,\sdots,x_n).\,f(x_1,\sdots,x_n)(s),\  \lambda(x_1,\sdots,x_n).\,\pi_1g(x_1,\sdots,x_n)(s)\bigr) \in \MCR^p \bigr\}
  \end{align*}
  where $(x_1,\sdots,x_n)\in \tlocal X(0)^n$.

  The fundamental property for this logical relation says that for all terms $\Gamma \vbar a_1,\sdots,a_p \vdash t$, the two interpretations are related: $\bigl(\den{t}_{\tlocal(X),0}, \den{t}_{\tlocal'(X),0}\bigr) \in\MCR^p$.
  If $p=0$, this is enough to deduce the interpretations are equal.
  The fundamental property is proved by induction on $t$.
\end{proof}

The restriction of $\Gamma$ in~\Cref{prop:state-local-interp-coincide} to computation variables that do not depend on parameters and of $\Delta$ to be empty are reasonable because in the framework of \citet{DBLP:conf/lics/PirogSWJ18}, only programs with no open scopes are well-formed.
Therefore, only such programs can be substituted in $t$, justifying the restriction of $\den{t}_{\tlocal'(X)}$ to index $0$.

\subsubsection{Nondeterminism with $\cut$}\label{sec:nond-with-cut}
Recall the parameterized theory of explicit nondeterminism with $\cut$ from~\Cref{ex:cut-scope}.
According to~\Cref{prop:free-models}, this theory has a free model for each $X\in\Set^\bN$, denoted by $\tcut(X)$.

Define a functor on sets $\idemf$:
\begin{displaymath}
  \idemf(A) = A \uplus \{a^* \vbar a\in A\}.  
\end{displaymath}
Here in the notation $a^*$, the $*$ is just a flag. It is a semantic counterpart of $\cut$, meaning `discard the yet uninspected choices and continue with $a$',
following~\cite[\S6]{DBLP:journals/jfp/PirogS17}. (It is not the Kleene star of Prop.~\ref{prop:pswj-monads-isomorphic}.)

If $X$ is truncated, the carrier of the free model on $X$ is:
\begin{displaymath}
  \tcut(X)(n) = (\idemf\circ\listf)^{n+1}(X(0)).
\end{displaymath}
(So these can be thought of as balanced binary trees, depth $n+1$, leaves labelled by $X(0)$, and where each non-leaf node is optionally flagged with $*$.)

When writing the interpretation of operations, we use the isomorphism
\begin{mathpar}
  \tcut(X)(0) \cong \idemf(\listf(X(0)))) \and \tcut(X)(n+1) \cong \idemf(\listf(\tcut(X)(n)))
\end{mathpar}
so an element of $\tcut(X)(n)$ is either a list $\mathsf{xs}$ or a starred list $\mathsf{xs}^*$.

The interpretation of $\cut$ and $\orop$ in the free model is the following:
\begin{displaymath}
  \begin{array}{l}
    \begin{aligned}
      &\cut_n : \tcut(X)(n)\rightarrow \tcut(X)(n) && \cut_n(\mathsf{xs}) = \mathsf{xs}^*,\quad \cut_n(\mathsf{xs}^*) = \mathsf{xs}^* \\
    \end{aligned}\\ \\%
    \begin{aligned}   
  &\orop_n : \tcut(X)(n)\times \tcut(X)(n) \rightarrow \tcut(X)(n) \\ 
  &\orop_n(\mathsf{xs},\mathsf{ys}) = \mathsf{xs}\concat\mathsf{ys},\quad
    \orop_n(\mathsf{xs},\mathsf{ys}^*) = (\mathsf{xs}\concat\mathsf{ys})^*, \quad
    \orop_n(\mathsf{xs}^*,-) = \mathsf{xs}^*
    \end{aligned}
  \end{array}
\end{displaymath}
The $\cut$ operation marks a list with a star; the second clause in the definition of $\cut$ corresponds to the idempotence equation~(\ref{eq:13}).
Similarly to~\Cref{sec:nond-with-once-model}, $\orop$ performs list concatenation, but now stars need to be taken into account as well.
The second clause in definition of $\orop$ corresponds to~\cref{eq:14}, and the third clause to~\cref{eq:15}.

The operation of opening a scope is interpreted as:
\begin{displaymath}
  \begin{aligned}
    &\scope_n : \tcut(X)(n+1) \rightarrow \tcut(X)(n) \\
    &\scope_n(\mathsf{xs^*}) = \scope_n(\mathsf{xs}), \quad
    \scope_n([]) = [], \quad
    \scope_n(\mathsf{x::xs}) = \orop_n(\mathsf{x},\scope_n(\mathsf{xs}))
  \end{aligned}
\end{displaymath}
The first clause corresponds to erasing cuts as in~\cref{eq:11}.
The second clause corresponds to~\cref{eq:16}.
The third clause uses the interpretation of $\orop$ to concatenate lists while taking into account cuts; it corresponds to~\cref{eq:12}.

The $\close$ and $\fail$ operations have the same interpretation as in the nondeterminism with $\once$ example from~\Cref{sec:nond-with-once-model}:
\begin{align*}
  &\close_n : \tcut(X)(n) \rightarrow \tcut(X)(n+1) && \close_n(x) = [x] \\
  &\fail_n : 1\rightarrow \tcut(X)(n) &&\fail_n() = []
\end{align*}

\todo[inline]{C: I'm not sure how to formulate a theorem that says this model is essentially the same as the one in \citet{DBLP:journals/jfp/PirogS17} for $n=0$.}

\begin{theorem}\label{thm:cut-free-model}
  If $X\in\Set^\bN$ is truncated, then $\tcut(X)$ is the free model of the parameterized theory of nondeterminism with $\cut$, defined in~\Cref{ex:cut-scope}.
\end{theorem}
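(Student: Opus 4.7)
The plan is to follow the same template as the proofs of Thms.~\ref{thm:nondet-free-model}, \ref{thm:catch-free-model}, and \ref{thm:state-free-model}. First I would check that $\tcut(X)$, equipped with the operations described above, is actually a model of the theory in~\Cref{ex:cut-scope}. The monoid equations~\crefrange{eq:3}{eq:3b} hold by associativity and unit properties of list concatenation, once one observes that the starred cases are forced by the definition of $\orop_n$ (and in particular $\orop_n(\mathsf{xs}^*,-)=\mathsf{xs}^*$ makes $\fail$ a two-sided unit). The cut equations~\crefrange{eq:15}{eq:13} follow directly from the three-clause definition of $\orop_n$ and the idempotence clause of $\cut_n$. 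For the scope equations, \cref{eq:16} and \cref{eq:11} are immediate from the first two clauses of $\scope_n$, and \cref{eq:12} follows by unfolding both sides using the recursive definition of $\scope_n$ together with the $\orop_n$ clauses.

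By~\Cref{prop:free-models}, the unit map $X \to \freecut(X)$ and the $\Set^\bN$-morphism sending $c \in X(0)$ to the singleton $[c] \in \tcut(X)(0)$ induce a unique homomorphism of models $\rho : \freecut(X) \to \tcut(X)$. To prove $\rho$ is an isomorphism I would exhibit a candidate inverse $\sigma : \tcut(X) \to \freecut(X)$ in $\Set^\bN$ by recursion on $n$ and on the list/star structure, using a context $\Gamma_X$ containing a variable $(x{:}0)$ for each element of $X(0)$. Informally, a list $[y_1,\dots,y_k]$ at level $0$ maps to $\orop(y_1,\orop(\dots,\orop(y_k,\fail)\dots))$ and at level $n{+}1$ to $\orop(\close(a_{n+1},\sigma_n(y_1)),\dots)$; a starred list $\mathsf{xs}^*$ maps to $\cut$ applied to the image of $\mathsf{xs}$. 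These are the normal forms of the parameterized theory when computation variables have arity $0$.

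The main work is then verifying that $\sigma$ is a homomorphism, which amounts to checking that $\sigma$ commutes with $\orop$, $\cut$, $\close$, $\scope$, $\fail$ on the nose. The $\orop$ case proceeds by induction on the first list argument and splits on whether each list is starred, using~\crefrange{eq:3}{eq:3b} together with~\crefrange{eq:15}{eq:14} to absorb or propagate stars as the definition of $\orop_n$ dictates. The $\cut$ case reduces to one application of~\eqref{eq:13} in the starred subcase. The $\scope$ case is the most delicate: by induction on the list structure, using~\cref{eq:16} for the empty list, \cref{eq:12} for the cons case, and~\cref{eq:11} to handle the starred case. I expect this to be the primary obstacle, since one needs to chain~\cref{eq:11,eq:12} carefully to push $\scope$ past the alternating $\orop$/$\close$/$\cut$ structure produced by $\sigma$.

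Finally, $\rho \circ \sigma = \mathsf{id}_{\tcut(X)}$ follows by induction on $n$ and on the list-with-star structure, using that $\rho$ is a homomorphism and therefore commutes with all the interpreted operations. Conversely, $\sigma \circ \rho = \mathsf{id}_{\freecut(X)}$ is proved by induction on term formation (more precisely, on a representative term of each equivalence class), using that $\sigma$ is a homomorphism, as just established, together with the variable case, which holds on generators by construction of $\rho$. Together these give the required isomorphism of models, establishing freeness.
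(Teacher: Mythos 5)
Your proposal is correct and follows essentially the same route as the paper's proof: verify $\tcut(X)$ is a model clause-by-clause, obtain $\rho$ from freeness applied to the singleton-list map, define the candidate inverse $\sigma$ by recursion on $n$ and on the list/star structure via normal forms, show $\sigma$ is a homomorphism (with list induction in the $\orop$ and $\scope$ cases), and establish the two composite identities by induction on $n$/lists and on term formation respectively. The only differences are cosmetic, such as the order in which the two composites are verified.
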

\begin{proof}
  We explained in this subsection why $\tcut(X)$ together with its interpretation of operations respects the equations of the parameterized theory.
  Therefore, $\tcut(X)$ is a model of the theory, in the sense of~\Cref{def:model}.

  Denote by $\freecut(X)$ the free model on $X$ for nondeterminism with $\cut$, in the sense of~\Cref{prop:free-models}.
  Consider the map $X\rightarrow \tcut(X)$ that sends each element of $X(0)$ to the singleton list containing that element.
  By freeness~(\Cref{def:freeness}), this map has a unique extension to a homomorphism of models $\rho:\freecut(X)\rightarrow \tcut(X)$.
  To show $\tcut(X)$ is free, it is enough to show that $\rho$ has an inverse. 

  Recall that $\tcut(X)(n)$ is isomorphic to $(\idemf\circ\listf)^{n+1}(X(0))$, and $\freecut(X)(n)$ contains equivalence classes of terms in context, where the computation variables come from $X$.
  We define a candidate inverse $\sigma:\tcut(X)\rightarrow\freecut(X)$ by induction on $n\in\bN$ and on lists.
  For simplicity, we use representatives of equivalence classes from $\freecut(X)(n)$ where the context $\Gamma_X$ contains one variable $x_i:0$ for each $c_i\in X(0)$, so we identify $x_i$ and $c_i$ in the definition below:
  \begin{displaymath}
    \arraycolsep=0pt
    \begin{array}{rl}
      \sigma_0(\mathsf{xs}^*) &{}= \Gamma_X\vbar - \vdash \cut(\sigma_0(\mathsf{xs}))
    \quad \text{where } \mathsf{xs} \in\listf(X(0))  
    \\
      \sigma_0([]) &{}= \Gamma_X\vbar - \vdash \fail \\
      \sigma_0(\mathsf{x::xs}) &{}= \Gamma_X\vbar -\vdash \orop(\mathsf{x},\sigma_0(\mathsf{xs}))
    \\[6pt]
      \sigma_{n+1}(\mathsf{xs}^*) &{}= \Gamma_X \vbar a_1,\sdots,a_{n+1} \vdash \cut(\sigma_{n+1}(\mathsf{xs})) \\
      \sigma_{n+1}([]) &{}= \Gamma_X \vbar a_1,\sdots,a_{n+1} \vdash \fail \\
      \sigma_{n+1}(\mathsf{x::xs}) &{}= \Gamma_X \vbar a_1,\sdots,a_{n+1} \vdash\orop\bigl(\close(a_{n+1},\sigma_n(\mathsf{x})),\,\sigma_{n+1}(\mathsf{xs}) \bigr) 
   \end{array}
\end{displaymath}
The terms in the image of $\sigma$ can be seen as the normal forms for the parameterized theory, defined recursively.

To show that $\sigma$ is a section, $\rho\circ\sigma = \mathsf{id}_{\tcut(X)}$, we use induction on $n\in\bN$.
For both the base case and the induction step, we first consider the case where the input is a list $\mathsf{xs}$ and use induction on lists.
We use this intermediate result to prove the case of a starred list $\mathsf{xs}^*$.
Throughout, we use the fact that $\rho$ is a homomorphism.

To show $\sigma\circ\rho = \mathsf{id}_{\freecut}$, we fist show $\sigma$ is a homomorphism of models, i.e~that it commutes with all the operations.
The $\orop$ and $\scope$ cases require induction on lists.
Most importantly, we make use of equations from the parameterized theory of cut~(\Cref{ex:cut-scope}).
We then show by induction on the term formation rules that:
\begin{displaymath}
  \sigma_n\bigl(\rho_n(\Gamma_X\vbar a_1,\sdots,a_n\vdash t)\bigr) =_{\mathsf{cs}} \bigl( \Gamma_X\vbar a_1,\sdots,a_n\vdash t \bigr)
\end{displaymath}
where the equality $=_{\mathsf{cs}}$ is in the theory of cut.
We use the equations in the theory and the fact that $\sigma$ is a homomorphism.
The $\scope$ case also requires an induction on lists.
\end{proof}
\todo[inline]{C: maybe this proof has too much detail?}

\subsection{Generating a Parameterized Theory from an Arbitrary Scoped Operation}\label{sec:gener-param-theory}





In~\Cref{rem:scoped-signature-to-parameterized-signature} we showed how to translate signatures of arbitrary algebraic and scoped~(\Cref{def:alg:sigs,sec:scop-effects-handl}) operations into parameterized signatures.
We then showed in~\Cref{sec:pat:ex} how to encode the behaviour of several \emph{examples} of scoped operations using equations that form parameterized theories.
In~\Cref{sec:free-models-scoped}, we constructed free models for these examples of scoped operations.
In this section, we show how to obtain a parameterized theory from an arbitrary
scoped operation, and discuss the free model of such a theory.

Recall from~\Cref{persp:scoped-effects} that a scoped operation is one that does not commute with substitution, and more precisely from~\Cref{def:scoped-op-on-monad} that a scoped operation on a monad $T$ on $\Set$ is a natural transformation $\sigma_A : (TA)^k \to TA$.

  

Starting from an algebraic theory $\MCT = \tuple{\Sigma, E}$, with monad $T$ supporting a scoped operation $\sigma$ with arity $k$, we can construct a parameterized theory $\thparam$ in the sense of~\Cref{def:presentation}.
The signature of $\thparam$ is constructed using~\Cref{rem:scoped-signature-to-parameterized-signature} applied to the algebraic signature $\Sigma$, first extended to a scoped signature with one scoped operation symbol $\mathsf{sc}$ (standing for $\sigma$) with arity $(0\vbar 1\sdots 1)$, where the list $1\sdots 1$ has length k.

The equations of the parameterized theory $\thparam$ come in two kinds.
First, each equation in $E$ gives an equation in $\thparam$ in the evident way.
Second, for each $n \in \bN$ and each $k$-tuple of $\Sigma$-terms $(x_1,\sdots,x_n \vdash t_1,\sdots,t_k)$, the $\MCT$-equality class of those terms gives an argument for $\sigma_n : T(n)^k \to T(n)$, and $\sigma_n([t_1],\ldots,[t_k]) = [t']$ for some $x_1,\sdots,x_n \vdash t'$.
For each such argument $\thparam$ gets an equation
\begin{equation}
  x_1:0,\sdots, x_n:0 \vbar - \vdash \mathsf{sc}(a.t_1[\close(a,x_i)/x_i]_{1\leq i\leq n}, \dots, a.t_k[\close(a,x_i)/x_i]_{1\leq i\leq n}) = t'.\label{eq:scoped-on-ground}
\end{equation}



Let $F_{\thparam} X$ be the free model over $X$ of the parameterized theory $\thparam$ from~\Cref{prop:free-models}.
Both $X$ and $F_{\thparam} X$ are objects in $\Set^\bN$.
From the free model of $\thparam$ we can recover the free model of the algebraic theory $\MCT$ that we started with.

\begin{theorem}\label{thm:scoped-theory-from-pat}
  Let $\MCT = \tuple{\Sigma,E}$ be an algebraic theory inducing a monad $T$ on $\Set$, supporting a scoped operation $\sigma$ of arity $k$.
  With $\thparam$ constructed as above, we have the following isomorphism:
  \begin{displaymath}
    T \cong F'_{\thparam} = \downn{\ } \circ F_{\thparam} \circ \upp{}.
  \end{displaymath}
  Moreover, the scoped operation on $F'_{\thparam}$ corresponding to $\mathsf{sc}$, as constructed in~\Cref{prop:pat-monad-on-set}, agrees with the operation $\sigma$.

  


\end{theorem}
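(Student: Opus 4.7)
The plan is to exhibit mutually inverse maps between $TA$ and $F'_{\thparam} A = F_{\thparam}(\upp A)(0)$ for each set $A$, and then to check that the resulting bijection respects the monad structure and the scoped operation. First, I construct a $\thparam$-model on the constant sequence $\tilde T A \in \Set^\bN$ with $\tilde T A(n) = TA$ for all $n$, interpreting each algebraic operation of $\Sigma$ by the $T$-algebra structure on $TA$, $\close_n$ as the identity on $TA$, and $\mathsf{sc}_n$ as $\sigma_A$. The equations of $E$ hold because $TA$ is the free $\MCT$-model on $A$, while the equations~\eqref{eq:scoped-on-ground} hold by the very construction of $\sigma$. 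Freeness of $F_{\thparam}(\upp A)$ (Prop.~\ref{prop:free-models}) gives a $\thparam$-homomorphism $\bar\beta : F_{\thparam}(\upp A) \to \tilde T A$, whose level-$0$ component I call $\beta$. In the reverse direction, $F_{\thparam}(\upp A)(0)$ carries a $\MCT$-model structure obtained by restricting the interpretation of the algebraic operations, so freeness of $TA$ as a $\MCT$-model yields $\alpha : TA \to F_{\thparam}(\upp A)(0)$ extending the natural map $A \to F_{\thparam}(\upp A)(0)$. That $\beta \circ \alpha = \mathsf{id}_{TA}$ is immediate: both $\alpha$ and $\beta$ preserve the $\MCT$-structure, the composite extends $\mathsf{id}_A$, and so it must be the identity by the universal property of $TA$ as a free $\MCT$-model.

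The main obstacle is $\alpha \circ \beta = \mathsf{id}_{F_{\thparam}(\upp A)(0)}$, which reduces to proving that every level-$0$ $\thparam$-term with variables drawn from $A$ is $=_{\thparam}$-equivalent to a $\Sigma$-term over $A$. I would establish this via the following normal-form lemma, proved by induction on term structure: every term $\Gamma \vbar a_1,\sdots,a_n \vdash t$ with arity-$0$ variables in $\Gamma$ is $=_{\thparam}$-equivalent to a $\Sigma$-term whose leaves are either variables from $\Gamma$ (when $n=0$) or of the shape $\close(a_n, w)$ for $w$ recursively a level-$(n{-}1)$ normal form. The cases for variables, $\close(a_n, t')$, and algebraic operations are straightforward. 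The delicate case is $t = \mathsf{sc}(b.u_1,\sdots,b.u_k)$ at level $n$: by the inductive hypothesis each $u_i$ normalises to a $\Sigma$-term whose leaves are $\close(b, w_i^j)$ for level-$n$ normal forms $w_i^j$; collecting these leaves into a combined list $w_1,\sdots,w_N$ allows each $u_i$ to be rewritten as $\hat f_i[\close(b, w_j)/z_j]$ for a $\Sigma$-term $\hat f_i(z_1,\sdots,z_N)$; then~\eqref{eq:scoped-on-ground} at the chosen $n=N$, combined with the substitution rule sending $z_j$ to $w_j$, rewrites $t$ as $t'[w_j/z_j]$ where $\sigma_N([\hat f_1],\sdots,[\hat f_k]) = [t']$ in $T(N)$; this is again a level-$n$ normal form. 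The hard part is the bookkeeping needed to uniformly present the various $u_i$ as instances of a single multi-variable $\Sigma$-term, so that the left-hand side of~\eqref{eq:scoped-on-ground} can be matched exactly; in particular the inductive hypothesis must already have eliminated all inner $\mathsf{sc}$-subterms.

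With the normal-form lemma in hand, the isomorphism $F'_{\thparam} A \cong TA$ is natural in $A$ because $\alpha$ and $\beta$ are both given by universal properties. To upgrade to a monad isomorphism I verify that the unit and multiplication of $F'_{\thparam}$ (transported along the adjunction $\upp{} \dashv \downn{\ }$) agree with those of $T$: the unit case is immediate from the construction of $\alpha$, and the multiplication case reduces to the observation that monadic bind is interpreted by simultaneous substitution on both sides, which is preserved by the isomorphism thanks to the normal-form lemma. Finally, for the clause on scoped operations, the derived operation on $F'_{\thparam}$ constructed in Prop.~\ref{prop:pat-monad-on-set} proceeds by first applying the ``add $\close$ at the leaves'' map and then the interpretation of $\mathsf{sc}$; applied to arguments $t_1,\sdots,t_k \in TA$ represented as $\Sigma$-terms, the add-$\close$ map produces precisely the left-hand side of~\eqref{eq:scoped-on-ground}, which therefore equals $\sigma_A(t_1,\sdots,t_k)$ via that very equation, as required.
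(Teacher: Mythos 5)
There is a genuine gap at the very first step: the constant sequence $\tilde T A(n) = TA$ with $\close_n = \mathrm{id}$ and $\mathsf{sc}_n = \sigma_A$ is \emph{not} a $\thparam$-model, so the map $\beta$ you rely on for injectivity does not exist. Unwinding \Cref{def:model}, validating an instance of \eqref{eq:scoped-on-ground} in $\tilde T A$ requires, for \emph{all} tuples $\vec c \in (TA)^n$ and not just tuples of generators $\eta_A(a)$, that
\begin{displaymath}
  \sigma_A\bigl([t_1]\bind\vec c,\,\ldots,\,[t_k]\bind\vec c\bigr) \;=\; \sigma_n\bigl([t_1],\ldots,[t_k]\bigr)\bind\vec c,
\end{displaymath}
which is exactly the algebraicity law \eqref{algebraicity} that a scoped operation is \emph{not} required to satisfy---dropping that law is the whole content of \Cref{def:scoped-op-on-monad}. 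Concretely, for $T=\listf$ and $\sigma=\once$ (head of a list), taking $n=k=1$ and $t_1=x_1$ gives $t'=x_1$, so $\tilde TA$ would have to validate $\mathsf{sc}(a.\close(a,x_1))=x_1$; but its left-hand side in $\tilde TA$ is $\once(c_1)$, which differs from $c_1$ for any list $c_1$ of length at least two. Your justification that \eqref{eq:scoped-on-ground} ``holds by the very construction of $\sigma$'' uses only naturality, which covers $\vec c = \eta_A\circ\vec a$ but not general $\vec c$. The normal-form lemma alone cannot rescue the argument: it gives surjectivity of $\alpha$, but injectivity still needs \emph{some} valid $\thparam$-model separating $\MCT$-inequivalent $\Sigma$-terms at level $0$.

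The repair is precisely the model the paper uses: $\bar T A(m) = T^{m+1}A$ with $\close_m = \eta_{T^{m+1}A}$ and $\mathsf{sc}_m = \mu\circ\sigma_{T^{m+1}A}$. There \eqref{eq:scoped-on-ground} holds because $\mu\circ\sigma\circ T\eta = \mu\circ T\eta\circ\sigma = \sigma$ by naturality and the unit law; the extra level of $T$ introduced by $\close=\eta$ is exactly what absorbs the failure of algebraicity, and it is also why the free model cannot be level-wise constant. The paper moreover sidesteps your normal-form induction: since $\bar TA$ is already known to be free for $\MCT_2$ by \Cref{prop:restricts-to-monad-for-alg-theory}, it suffices to check that any $\MCT_2$-homomorphism from $\bar TA$ into a $\thparam$-model automatically commutes with $\mathsf{sc}$, which follows because every element of $\bar TA(m+1)$ has the form $\den{t}\bigl(\close(\vec c)\bigr)$ for some $\Sigma$-term $t$. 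If you wish to keep your two-map structure, replace $\tilde TA$ by $\bar TA$ and take $\beta$ to be the level-$0$ component of the induced homomorphism; your normal-form lemma (whose $\mathsf{sc}$ case is plausible but needs the careful bookkeeping you flag) then becomes a correct, if redundant, proof of surjectivity.
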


\begin{proof}
  In fact, we will show that the free model $F_{\thparam}$ on a truncated object $\upp{A}$, for a set $A$, is $\bar T A \coloneqq \lambda m.\, T^{m+1} A$, from the proof of \Cref{prop:restricts-to-monad-for-alg-theory}.
  We first note that $\bar T A$ supports an interpretation of $\mathsf{sc}$ by
  the follow morphism:
  \begin{displaymath}
    \bigl((\bar T A)(m+1)\bigr)^k = \bigl(T^{m+2}(A))\bigr)^k \xrightarrow{\sigma_{T^{m+1}(A)}} T^{m+2}(A) \xrightarrow{\mu_{T^{m+1}(A)}} T^{m+1}(A) = (\bar T A)(m).
  \end{displaymath}
  
  To check that $\bar T A$ is a model of $\thparam$, we need to see that all instances of \eqref{eq:scoped-on-ground} are validated.
  Note that, for terms $x_1:0,\sdots,x_n:0 \vbar - \vdash t_i$,
  \begin{displaymath}
    \den{t_i[\close(a,x_i)/x_i]_{1 \leq i \leq n}}_m
    =
    (T^{m+1}A)^n \xrightarrow{\den{t_i}_m} T^{m+1}(A) \xrightarrow{T\eta_{T^m A}} T^{m+2}(A)
  \end{displaymath}
  and $\mu \circ \sigma \circ T\eta = \mu \circ T \eta \circ \sigma = \sigma$ by naturality of $\sigma$ and the monad laws.

  Since $\bar T A$ is free on $\upp{A}$ for the theory $\MCT_2$ of~\Cref{prop:restricts-to-monad-for-alg-theory}, which omits $\mathsf{sc}$ and the equations~\eqref{eq:scoped-on-ground}, we only need to show that a $\MCT_2$-homomorphism $\bar T A \to Y$ into a $\thparam$-model is always a \mbox{$\thparam$-homomorphism}, i.e.\ that it commutes with $\mathsf{sc}$.
  But every element of $\bar T A(m+1) = T(\bar T A(m))$ is in the image of $\den t_{m+1} \circ \den \close_m$ for some $\Sigma$-term $x_1,\sdots,x_n\vdash t$, thus this amounts to the hypothesis that $Y$ validates all instances of \cref{eq:scoped-on-ground}.
\end{proof}





\begin{remark}
  Whereas in~\Cref{sec:pat:ex} we constructed scoped theories out of particular algebraic theories whose monads support a scoped operation, the recipe for $\thparam$ constructs a scoped theory in the general case.
  However, the presentation of $\thparam$ typically includes infinitely many equations whereas the theories in \Cref{sec:pat:ex} are given finite presentations.
  For example, in the case of $\once$~(\Cref{fig:nondet-once}),~\cref{eq:4b,eq:4} are not part of $\thparam$ but infinitely many substitution instances of them are.
  The free models of $\thparam$ and that in~\Cref{sec:nond-with-once-model} agree on truncated $X$, (since the free model of the algebraic theory of explicit nondeterminism is $\listf$), which shows that the theory of $\once$ admits derivations of all equations of the corresponding $\thparam$.
  However, \cref{eq:4b,eq:4} are not derivable in $\thparam$.
  We argue that the full behaviour of $\once$ is captured by the theory in~\Cref{fig:nondet-once}, rather than by $\thparam$.
  \todo[inline]{SM: Seems like a big gap that we don't know if the theories are the same. I have an idea so I'll come back to this.}
\end{remark}

\section{Summary and research directions}\label{sec:summ-rese-direct}

\todo[inline]{C: Discussion section about other applications of parameterized theories. Here or somewhere else.}

We have provided a fresh perspective on scoped effects in terms of the formalism of parameterized algebraic theories, using the idea that scopes are resources (Rem.~\ref{rem:scoped-signature-to-parameterized-signature}).
As parameterized algebraic theories have a sound and complete algebraic theory (Props.~\ref{thm:sound},~\ref{prop:free-models}), this carries over to a sound and complete equational theory for scoped effects.
We showed that our approach recovers the earlier models for scoped non-determinism, exceptions, and state (Thms.~\ref{thm:nondet-free-model}--\ref{thm:state-free-model}).

Here we have focused on equational theories for effects alone. But as is standard with algebraic effects, it is easy to add function types, inductive types, and so on, together with standard beta/eta theories (e.g.~\cite{pp-alg-op-gen-eff},\cite[\S5]{DBLP:conf/popl/Staton15}). This can be shown sound by the simple models considered here, as indeed the canonical model $\Set^\bN$ is closed and has limits and colimits.

\paragraph{Combinations of theories.}
A more concrete research direction is to define combinations of parameterized theories, and hence of scoped effects, using sum and tensor, generalizing combinations of algebraic effects~\cite{HPP06Combine}.
Informally, the tensor corresponds to all the operations from one theory commuting with the operations from the other.
But in the case of parameterized theories in $\Set^\bN$, defining what this means for operations to commute is delicate because substitution,~\cref{eq:general-subst-rule}, is more complex than in the algebraic case (but see~\cite[Def.~2]{DBLP:conf/lics/Staton13}).

For example, consider the commuting combination of explicit nondeterminism with $\once$~(\cref{fig:nondet-once}) and global state~(\Cref{sec:algebraic-effects}).
In the resulting theory, the state is rolled back when a computation fails.
For example, the fact that $\putop{0}$ commutes with $\fail$ means we can prove:
\begin{align*}
  \putop{1}\bigl(\orop(\putop{0}(\fail),\,\get(x_0,x_1)) \bigr) &= \putop{1}\bigl(\orop(\fail,\,\get(x_0,x_1)) \bigr) \\
                                                                &= \putop{1}\bigl(\get(x_0,x_1) \bigr) \\
  &= \putop{1}(x_1)
\end{align*}
using commutativity of $\putop{0}$ and $\fail$ for the first step.
The fact that $\putop{}$ and $\get$ commute with $\once$ and $\close$ means the state operations are independent of scopes.
\saml{Presumably the final term here should be $\putop{1}(x_1)$ rather than just $x_1$. SamS: I agree.}

\todo[inline]{C: an example involving once would be nicer.}

\medskip
Our work opens up new directions for scoped effects, in theory and in practice.
By varying the substructural laws of parameterized algebraic theories, we can recover foundations for scoped effects where scopes (as resources) can be reordered or discarded, i.e.~where they are not well-bracketed, already considered briefly in the literature~\cite{DBLP:conf/lics/PirogSWJ18}.
For example, the parameterized algebraic theory of qubits~\cite{DBLP:conf/popl/Staton15} might be regarded as a scoped effect, where we open a scope when a qubit is allocated and close the scope when it is discarded; this generalizes traditional scoped effects as multi-qubit operations affect multiple scopes.

\begin{acks}
We are grateful to many colleagues for helpful discussions and to the ESOP 2024 reviewers for their helpful comments and suggestions.
This work was supported by the UKRI Future Leaders Fellowship “Eﬀect Handler Oriented Programming” (reference number MR/T043830/1), ERC Project BLAST, and AFOSR Award No.~FA9550–21–1–003.

\todo[inline]{Reference ESOP paper, and say what is new, here or elsewhere.

  C: Stephanie said in an email:
``Your paper should be
accompanied by a cover letter stating that it is part of the ESOP
"journal-after" process. Your submission should also include a
detailed description of the added content to your paper and any other
changes that you have made to the work, especially in response
to your ESOP reviews.''

S: I would include this in the paper too because I've seen cases where the cover letter got overlooked. In our case the submission might look short so we can delicately say it's actually longer because it was a perspectives paper? }
\end{acks}

\bibliographystyle{ACM-Reference-Format}
\bibliography{refs}

\end{document}